\newcounter{MYtempeqncnt}
\def\Pr{\mathbb{P}}
\def\Pr{\mathbb{P}}
\def\peq#1{\stackrel{\text{\scriptsize(#1)}}{=}}
\def\ind{\mathbbm{1}}
\def\Ex{\mathbb{E}}
\def\Var{\mathbb{V}}
\def\dd{\mathrm{d}}
\theoremstyle{definition}
\newtheorem{theorem}{Theorem}
\newtheorem{corollary}{Corollary}
\newtheorem{lemma}{Lemma}
\newtheorem{remark}{Remark}
 \title{{Spatial Disease Propagation With Hubs}}
\author{\IEEEauthorblockN{Ke Feng}
and
\IEEEauthorblockN{Martin Haenggi,}~\IEEEmembership{Fellow,~IEEE}
\thanks{Ke Feng is with Laboratoire ETIS-UMR 8051, CY Cergy Paris Université, ENSEA, CNRS, Cergy-Pontoise, France. The work of K. Feng was supported by the ERC NEMO grant, under the European Union's Horizon 2020 research
and innovation programme, grant agreement number 788851 to INRIA. E-mail: \href{mailto:ke.feng@ensea.fr}{ke.feng@ensea.fr}. Martin Haenggi is with the Dept. of Electrical Engineering, University of Notre Dame, Indiana, USA. E-mail: \href{mailto:mhaenggi@nd.edu}{mhaenggi@nd.edu}.}}
\begin{document}
\maketitle
\begin{abstract}
    Physical contact or proximity is often a necessary condition for the spread of infectious diseases.
Common destinations, typically referred to as hubs or points of interest, are arguably the most effective spots for the type of disease spread via airborne transmission. In this work, we model the locations of individuals (agents) and common destinations (hubs) by random spatial point processes in $\mathbb{R}^d$ and
focus on disease propagation through agents visiting common hubs. The probability of an agent visiting a hub depends on their distance through a connection function $f$. The system is represented by a random bipartite geometric (RBG) graph. We study the degrees and  percolation of the RBG graph for general connection functions. We show that the critical density of hubs for percolation is dictated by the support of the connection function $f$, which reveals the critical role of long-distance travel (or its restrictions) in disease spreading.
\end{abstract}
\begin{IEEEkeywords}
Stochastic geometry, epidemic networks, disease propagation, random geometric bipartite graph, degrees, percolation
\end{IEEEkeywords}

\section{Introduction}
\subsection{Motivation}
The spread of many airborne diseases relies on physical contact or proximity between susceptible individuals. Individuals visiting common destinations are especially likely to further diseases' propagation. For humans, such destinations can be supermarkets, social gatherings, airports, etc. Understanding the joint effect of individual density, density of the common destinations, and their connectivity pattern is a first step to the prevention of disease spreading.
To this end, we propose and study a
 mathematical model for spatial disease propagation with hubs. The model is based on \textit{random bipartite geometric} (RBG) graphs (defined in Sec II.A) and generalizes the AB continuum model \cite{iyer2012percolation,penrose_2014}.
In an RBG graph, there are two sets of entities,  referred to as \textit{agents} and \textit{hubs} respectively in this work, each being distributed in the Euclidean space according to a random spatial point process, hence the random and geometric aspect of the graph. Edges may only exist between the two sets of entities and not within, hence the bipartite aspect of it. {Lastly, the connectivity pattern between agents and hubs usually depends on their geometric distance. Dispersion of the connectivity pattern can be used to model a change in the behavior of the agents while keeping other parameters fixed, or to compare different scenarios where agents tend to travel farther or stay closer to their home location. }

\subsection{Related Work}
One of the earliest and most popular type of epidemic models are homogeneous mixture models, where contacts between all pairs of nodes are equally likely \cite{newman2018networks,pastor2015epidemic}. In these models, a key indicator is the basic reproduction number \cite{Delamater2019ComplexityOT}, defined as the expected number of secondary infections caused by a single infection. Such models do not reflect the heterogeneous nature of human interactions, which are often also location dependent.
A natural approach is to introduce spatial modeling \cite{Li2011TheFO,pastor2015epidemic,BARTHELEMY20111,katori2021continuum}. In spatial modeling, nodes in proximity have interactions and make it possible for infections to occur. The question is if or when the disease, through local connectivity, eventually spreads to a large component of the nodes when originated from a single node. Such questions are about the critical thresholds such that percolation occurs \cite{frisch1963percolation}.
The most known setups perhaps are square lattices with Bernoulli percolation in the discrete case and Boolean models based on Poisson point processes (PPPs) in the continuum. 
The latter is generalized by Poisson random connection models, which assume that two nodes $x,x'$ in a PPP connect with probability $f(\|x-x'\|)\in[0,1]$, where $f(\cdot)$ is the connection function \cite[Chapter 6]{MeesterRonald1996CP}. {For Poisson random connection models, it is shown that dispersive functions, which correspond to long-range but  unreliable connections, make 
percolation easier }\cite{Franceschetti05continuumpercolation,penrose1993spread}.
More recently, \cite{baccelli2020computational1} considers the combination of disease dynamics and random geometry through studying the steady state of an susceptible-infected-susceptible (SIS) model on spatial point processes. It considers transmissions between nodes within a given radius as well as the effect of node mobility.

 The role of hubs (common destinations) in disease spreading is not reflected in the models above. In this respect, an epidemic model based on an abstract random bipartite graph is studied in \cite{britton2008epidemics}. It constructs the random bipartite graph by drawing an edge between any pair of vertices of distinct types with a fixed probability, then extracts its unipartite graph, and derives the probability of explosion under tunable clustering. The model is homogeneous and does not consider geometric dependence. The model most relevant to our setting is the random bipartite geometric model in \cite{iyer2012percolation,penrose_2014}. It is also called AB continuum percolation model due to its discrete counterpart, AB percolation models on, e.g., $\mathbb{Z}^2$ \cite[Chapter 12]{grimmett1999percolation}. It is first proposed and studied by \cite{iyer2012percolation}, where the authors consider two independent PPPs $\Phi,~\Psi$, defined on $\mathbb{R}^2$, and edges are drawn between $x\in\Phi$ and $y\in \Psi$ if $x,y$ are within a prescribed distance. The motivation of the model comes from communication networks with two types of transmitters. The authors proved the percolation thresholds (defined in Sec IV) and their bounds. Further results for this model are proved in \cite{penrose_2014}. The RBG model studied here generalizes the AB continuum model in the same way that Poisson random connection models generalize Poisson Boolean models.

\subsection{Contribution}
The contributions of this work are summarized as follows.
\begin{itemize}
    \item We propose a mathematical model based on the random bipartite geometric (RBG) graph for disease propagation through agents visiting hubs.
    \item We analyze the statistics of the degree of the RBG graphs for general and Poisson point processes in $\mathbb{R}^d$. We show the impact of the dispersion of the connection function on these statistics.
    \item We show that the existence of a critical hub density for percolation depends only on the boundedness of the support of the connection function, which shows the necessity to curb long-distance travel to prevent disease spreading. Further, we give bounds on the critical densities of hubs and agents for the RBG graph with general connection functions. Lastly, we show that increasing the dispersion of the connection function lowers the critical threshold.
\end{itemize}

\section{System Model}
\subsection{The RBG Graph}
We model agents and hubs locations by $\Phi$ and $\Psi$, two jointly stationary and ergodic spatial point processes on $\mathbb{R}^d$, defined on a probability space $(\Omega,\mathcal{A},\mathbb{P})$.
Let  $\lambda$ and $\mu$ denote their intensities, respectively. We focus on the case where $\Phi$ and $\Psi$ are independent. Denote by $I(x,y)\in\{0,1\},~x\in\Phi,y\in\Psi$ the  indicator function of the undirected edge $\{x,y\}$, which is a Bernoulli random variable. $I(x,y)=1$ when there is an edge between $x,y$ and is zero otherwise. We assume that the distribution of $I(x,y)$ is independent of other edge indicators given $\Phi,~\Psi$. Further,
\begin{equation}
    \Pr(I(x,y)=1) =  f(\|x-y\|),\quad x\in\Phi,~y\in\Psi,
\end{equation} where $\|\cdot\|$ is the Euclidean norm and $f\colon \mathbb{R}^+\to[0,1]$ is referred to as {the connection function}. Let the edges of the graph be denoted by $\mathcal{E}$. We have
\[\mathcal{E} = \{\{x,y\}: I(x,y)=1,x\in\Phi,y\in\Psi\}.\] For practical relevance, we assume that $f$ is non-increasing and that $\int_{0}^{\infty} f(r) r^{d-1}\dd r <\infty$; the latter implies that the probability of an agent having unboundedly many edges is zero. 

The above defines an RBG graph, which we denote by $\mathcal{G}(\Phi,\Psi, f)$. 
 With a slight abuse of terminology, we say that two agents are \textit{connected} whenever there is a direct (two-edge) path  in $\mathcal{G}(\Phi,\Psi, f)$. From this definition, one can extract its corresponding unipartite geometric graph consisting of only the agents, where an edge exists between two connected agents. 

By contrast, a random (unipartite) geometric graph is generated on a spatial point process $\Phi$ on $\mathbb{R}^d$ with a connection function $f\colon \mathbb{R}^+\to[0,1]$, where $\mathcal{E} = \{\{x,y\}: I(x,y)=1,x\neq y,~x,y\in\Phi\}$ and $\Pr(I(x,y)=1) =  f(\|x-y\|),~x\neq y,~x,y\in\Phi$. 
We denote this unipartite graph by $\mathcal{G}(\Phi,f)$. 
The percolation of $\mathcal{G}(\Phi,f)$ for  a PPP $\Phi$ is studied in \cite[Chapters 3\&6]{MeesterRonald1996CP}, both for $f(r) = \ind (r\leq a)$, known as the Poisson Boolean model, and a generalized connection function, known as the Poisson connection model.

 Fig. \ref{fig:illu} shows snapshots of the RBG graph under two connection functions, one being the Boolean type, and the other being exponential. Both agent and hub locations are realizations of PPPs. 

  To model hubs of different popularity and density, the current model can be generalized to a heterogeneous spatial model similar to that of a multi-tier cellular network, which we will not discuss here.
  \begin{figure*}[t]
       \centering
       \subfigure[$f_1(r) = \ind(r\leq 0.1262)$]{         \includegraphics[width=.42\textwidth]{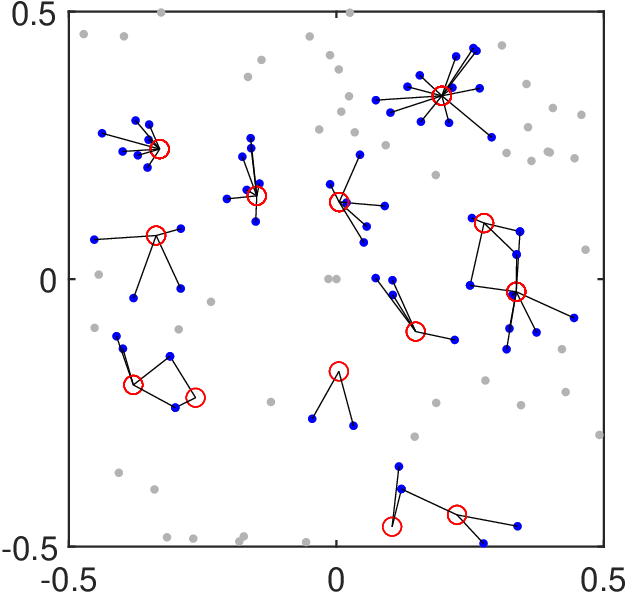}
         \label{fig: illu-f}}
     \hfill
     \subfigure[$f_2(r) = 1/2\exp(r/ 0.1262).$]{         \includegraphics[width=.42\textwidth]{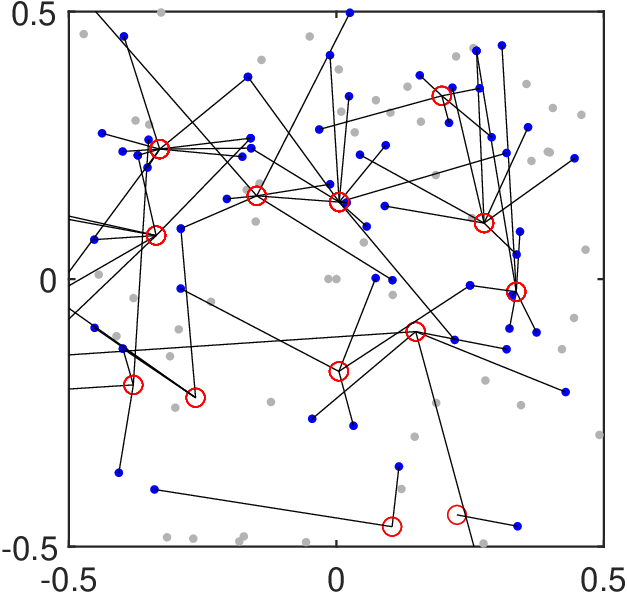}
         \label{fig: illu-g}}
       \caption{Illustrating one realization of the RBG graph on $\mathbb{R}^2$, where only its segment in a square window $[-0.5,0.5]^2$ is presented. Hubs are marked with circles, agents with dots, and edges with line segments. Agents without edges are in light gray. In both figures, the density of agents is 100 and the density of hubs is 10; the average number of edges of the typical hub is 5.  }
       \label{fig:illu}
   \end{figure*} 
\subsection{Dispersed RBG Graph}
 For a connection function $f$ and $p\in(0,1]$, we define its dispersed version
\begin{equation}
f_{p}(r) \triangleq p f(\sqrt[d]{p}r).
\end{equation}
{The dispersed version makes the connection function more spread-out and maintains the monotonicity. For example, when the support of $f$ is finite, dispersed version $f_p$ has expanded support. Edges with distance in the original support become less
reliable due to the monotonicity of $f$.}
It is easy to see that
$\int_0^{\infty} f(r)r^{d-1} \dd r = \int_{0}^{\infty} f_p(r)r^{d-1} \dd r$, i.e., the mean degrees of agents and hubs do not change. Specially, for $d=2$, we have $f_{p}(r) =p f(\sqrt{p}r)$, as defined in \cite{Franceschetti05continuumpercolation}. The RBG graph defined by the dispersed version of  connection function $f$ is $\mathcal{G}(\Phi,\Psi, f_p)$.



%



    \section{Degrees of the RBG Graph}
In this section, we consider some basic properties of the RBG graph $\mathcal{G}(\Phi,\Psi, f)$. Let $\Phi^o\triangleq \Phi\mid o\in\Phi.$ We focus on the typical agent at the origin and the Palm distribution of $\Phi$
\begin{equation}
 \mathrm{P}_{}^{o}(\cdot) \triangleq \Pr(\Phi\in\cdot \mid o\in\Phi),\nonumber
\end{equation}
and the reduced Palm distribution
\begin{equation}
 \mathrm{P}_{}^{!o}(\cdot) \triangleq \Pr(\Phi\setminus\{o\}\in\cdot\mid o\in\Phi).\nonumber
\end{equation}
We denote by $\Ex_{}^{{o}}$ the  expectation with respect to the Palm measure of $\Phi$
   and by $\Ex_{}^{!o}$ the  expectation with respect to the reduced Palm measure of $\Phi$ \cite[Chapter 8]{haenggi2012stochastic}.

\subsection{General Point Processes}

Let $c_d$ denote the volume of unit ball in $\mathbb{R}^d$.
The mean degree of the typical agent in a general point process is 
    \begin{align}
  \Ex^{o} \sum_{y\in\Psi}I(o,y)
       &\peq{a} \Ex \sum_{y\in\Psi}f(\|y\|)\nonumber
      \\
      &\peq{b} \mu c_d d \int_{0}^{\infty}f(r) r^{d-1} \dd r,\label{eq: EN_hub}
    \end{align}
    where step (a) follows from the independence of $\Phi$ and $\Psi$, and step (b) follows from Campbell's theorem. 
    By the mass transport principle \cite{baccelli:hal-02460214}, the mean degree of the typical hub (defined through the Palm expectation of $\Psi$) is $\lambda c_d d \int_{0}^{\infty}f(r) r^{d-1} \dd r.$ 
    For example, consider $d=2$, a general connection function $f$, and its dispersed version $f_{p}$. While the mean degree remains constant for any $p\in(0,1]$,
 the edges spreads out in space as $p\to 0$. To see this, consider the mean distance
 \begin{align}
 \Ex^{o} \sum_{y\in\Psi}I(o,y)\|y\| &=\int_{0}^{\infty}f_p(r)2\pi r^2 \dd r\nonumber \\
 & = \frac{1}{\sqrt{p}}\int_{0}^{\infty}f(r)2\pi r^2 \dd r,\nonumber
 \end{align}
which increases as the connection function becomes more dispersed.

 Let $M\triangleq\sum_{x\in\Phi^o\setminus\{o\}}\max_{y\in\Psi} I(o,y)I(x,y)$ denote the number of agents connected to $o$.  We have
    \begin{align}
        \Ex M &= \Ex\sum_{x\in\Phi^o\setminus\{o\} } 1-\prod_{y\in\Psi} \left(1-I(o,y)I(x,y)\right) \nonumber\\
        &= \Ex^{!o} \sum_{x\in\Phi} 1-\prod_{y\in\Psi} (1-f(\|y\|)f(\|x-y\|)).\nonumber
        \end{align}
    This follows from the fact that the edge indicator functions are independent Bernoulli random variables given $\Phi$ and $\Psi$.
    
    Let $N\triangleq\sum_{y\in\Psi}\sum_{x \in\Phi^o\setminus\{o\}} I(o,y)I(x,y)$ denote the total number of direct (two-edge) paths between $o$ and its connected agents. We have
    \begin{align}
       \Ex N  &= \Ex \sum_{y\in\Psi}\sum_{x \in\Phi^o\setminus\{o\}} I(o,y)I(x,y) \nonumber\\
       &= \Ex^{!o}\sum_{y\in\Psi}\sum_{x \in\Phi}f(\|y\|)f(\|x-y\|).\nonumber
    \end{align}

By definition, $N\geq M$, and the equality is achieved when $o$ and $x$ are connected through at most one hub, $\forall x\in\Phi$. The difference between $N$ and $M$ reflects how likely two agents connect through multiple hubs. The variance of $M$ is $\mathbb{V} M \triangleq \Ex M^2 - (\Ex M)^2$, which characterizes the heterogeneity of the number of connected agents.

\subsection{Poisson Point Processes}
\begin{figure*}[t]
\normalsize
\setcounter{MYtempeqncnt}{\value{equation}}
\setcounter{equation}{\value{equation}+2}
 \begin{align}
       \mathbb{V} N
       & = \Ex N+\lambda^2\mu\left(\int_{0}^{\infty}f(r)dc_d r^{d-1} \dd r\right)^3 \nonumber
       \\
       &\quad+\lambda\mu^2 \int_{\mathbb{R}^d}\int_{\mathbb{R}^d}\int_{\mathbb{R}^d}f(\|y_1\|)f(\|y_2\|) f(\|x-y_1\|)f(\|x-y_2\|) \dd y_1\dd y_2\dd x.\label{eq: E_N_o^2}
    \end{align}
    \setcounter{equation}{\value{MYtempeqncnt}}
\hrulefill
\vspace*{4pt}
\end{figure*}
Here we focus on the case where $\Psi$ and $\Phi$ are two independent PPPs in $\mathbb{R}^d$, defined on the same probability space. We derive the mean and variance of $N$ and $M$. For the PPP,
\[
{\rm P}_{\Phi}^{!o} = \rm P_{\Phi}.
\]
This is known as Slivnyak's Theorem  \cite{haenggi2012stochastic} and leads to  the result below.

    \begin{theorem}
    \label{thm: N,M}
For the PPP, the means of $M,~N$ are
\begin{equation}
  \Ex N = \lambda\mu \left(\int_{0}^{\infty}f(r)dc_d r^{d-1} \dd r\right)^2,        \label{eq: EN}
\end{equation}
and
 \begin{equation}
     \Ex M = \lambda\int_{\mathbb{R}^d}  1-\exp\left(-\mu\int_{\mathbb{R}^d}f(\|y\|)f(\|x-y\|)\dd y\right)\dd x.\label{eq: EM}
 \end{equation}
 Further, the variance of $N$ is given in (\ref{eq: E_N_o^2}),
 and the variance of $M$ satisfies $\mathbb{V} M \geq \Ex M$.
\end{theorem}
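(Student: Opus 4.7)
The overall strategy is to invoke Slivnyak's theorem ($\mathrm{P}^{!o}_\Phi = \mathrm{P}_\Phi$) to discharge the reduced Palm expectations, and then evaluate each resulting quantity via Campbell's formula, the PGFL of the PPP, or independent thinning. For $\Ex N$, starting from the general expression, Slivnyak combined with Campbell's formula on the two independent PPPs $\Phi$ and $\Psi$ yields $\lambda\mu\int\int f(\|y\|)f(\|x-y\|)\,\dd x\,\dd y$; the change of variables $z=x-y$ factors this into $\lambda\mu I_f^2$ with $I_f\triangleq dc_d\int_0^\infty f(r)r^{d-1}\,\dd r$, which is (\ref{eq: EN}). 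For $\Ex M$, after Slivnyak and Campbell on $\Phi$, the remaining inner expectation $\Ex\prod_{y\in\Psi}(1-f(\|y\|)f(\|x-y\|))$ is the PGFL of $\Psi$ evaluated at $g(y)=1-f(\|y\|)f(\|x-y\|)$, which gives $\exp(-\mu\int f(\|y\|)f(\|x-y\|)\,\dd y)$ and hence (\ref{eq: EM}).

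For $\mathbb{V} N$, I would expand $N^2$ as a double sum over pairs $(y_1,x_1),(y_2,x_2)\in\Psi\times(\Phi^o\setminus\{o\})$ and split into four cases according to whether $y_1=y_2$ and whether $x_1=x_2$. The all-equal case uses $I^2=I$ and contributes $\Ex N$; the case $y_1=y_2$, $x_1\neq x_2$ invokes the second factorial moment of the PPP $\Phi$ (after Slivnyak) and produces the $\lambda^2\mu I_f^3$ term; the symmetric case $x_1=x_2$, $y_1\neq y_2$ invokes the second factorial moment of $\Psi$ and yields the triple integral in (\ref{eq: E_N_o^2}); the fully distinct case factors into $(\Ex N)^2$ and cancels upon subtracting $(\Ex N)^2$.

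For $\mathbb{V} M\geq\Ex M$, the clean route is to condition on the ``star'' of $o$, namely $\Psi_o\triangleq\{y\in\Psi:I(o,y)=1\}$. Given $\Psi_o$, each agent $x\in\Phi\setminus\{o\}$ is connected to $o$ independently with probability $q(x)=1-\prod_{y\in\Psi_o}(1-f(\|x-y\|))$, since the edge variables $\{I(x,y):y\in\Psi_o\}$ are independent across $x$. Combined with Slivnyak, independent thinning of the PPP $\Phi$ implies that the set of agents connected to $o$ is, conditionally on $\Psi_o$, a PPP of intensity $\lambda q(\cdot)$, so $M\mid\Psi_o$ is Poisson and $\mathbb{V}(M\mid\Psi_o)=\Ex(M\mid\Psi_o)$. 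The law of total variance then yields $\mathbb{V} M = \Ex M + \mathbb{V}\,\Ex(M\mid\Psi_o)\geq\Ex M$. I expect the main obstacle to be precisely this last step: the indicators ``$x$ connected to $o$'' for different $x$ are not unconditionally independent because they share the edge variables $\{I(o,y)\}_{y\in\Psi}$, and the key observation is that conditioning on $\Psi_o$ restores conditional independence and turns $M$ into a Poisson mixture. The $\mathbb{V} N$ piece is otherwise only tedious rather than difficult.
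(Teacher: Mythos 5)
Your computations of $\Ex N$, $\Ex M$, and $\Var N$ follow the paper's proof essentially verbatim: Slivnyak plus Campbell for the means, the PGFL of $\Psi$ for $\Ex M$, and the four-way diagonal/off-diagonal decomposition of the double sum for $\Ex N^2$, with the same identification of which term produces $\Ex N$, which produces $\lambda^2\mu I_f^3$, which produces the triple integral, and which cancels against $(\Ex N)^2$. Where you genuinely depart from the paper is the inequality $\Var M\geq \Ex M$. The paper expands $\Ex M^2$ into the diagonal term $\Ex M$ plus the off-diagonal sum over distinct pairs $x_1,x_2$, and then argues that the latter is at least $(\Ex M)^2$ because the two indicators $\max_{y}I(o,y)I(x_1,y)$ and $\max_{y}I(o,y)I(x_2,y)$ are positively correlated through shared hubs. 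Your route instead conditions on the star $\Psi_o=\{y\in\Psi\colon I(o,y)=1\}$, observes that conditionally the connected agents form an independent thinning of $\Phi$ with retention probability $q(x)=1-\prod_{y\in\Psi_o}(1-f(\|x-y\|))$, so that $M\mid\Psi_o$ is Poisson, and then applies the law of total variance. This is valid (the edge variables $\{I(x,y)\}_{x\neq o}$ are independent of the thinning events $\{I(o,y)=1\}$ given the point configurations, so conditional independence across $x$ does hold), and it buys you something the paper's argument does not: an exact decomposition $\Var M=\Ex M+\Var\bigl[\Ex(M\mid\Psi_o)\bigr]$ identifying $M$ as a mixed Poisson variable, which makes the super-Poisson property immediate and quantifies the excess variance, whereas the paper's positive-correlation step is stated somewhat informally. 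Both approaches are sound; yours is arguably the cleaner justification of this particular claim.
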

\begin{proof}
See Appendix \ref{appendix: N,M}.
   \end{proof}
   \begin{remark} 
       $\Ex M = {\Theta}(\mu)$\footnote{We use $f(x)=\Theta(g(x))$ as $x\to a$ to denote that $f(x)=O\left(g(x)\right)$ and $ g(x)=O(f(x))$ as $x\to a$.} as $\mu\to0$, since $1-\exp(-\mu) = {\Theta}(\mu)$ as $\mu\to0$.
   \end{remark}
 \begin{remark}
Eq. (\ref{eq: EN}) holds for general hub point process $\Psi$. The other equations in Theorem 1 rely on the fact that $\Psi$ is a PPP.  In (\ref{eq: EM}), $\int_{\mathbb{R}^d}f(\|y\|)f(\|x-y\|)\dd y$ only depends on $\|x\|$, as the rotation of $x$ is equivalent to rotating $y$, which has no effect on the integral. This observation is used in the next subsection.
 \end{remark} 
 \begin{remark}
  Both $N$ and $M$ are super-Poisson since $\Var N\geq \Ex N$, and $\Var M\geq \Ex M$. In (\ref{eq: E_N_o^2}), note that in the second term the integral to the power of 3 is larger than the triple integral in the third term. Hence, for fixed $\lambda\mu$, there exists a threshold of $\lambda$ above which increasing $\lambda/\mu$ increases the variance of $N$.
 \end{remark}

  \begin{figure*}[t]
       \centering
       \subfigure[Mean and the standard deviation of $N$. ]{           \psfrag{EN}{$\mathbb{E}N$}
       \psfrag{VN, lambda = 5, mu=50}{$\sqrt{\mathbb{V}N},\lambda=5,\mu=50$}
              \psfrag{VN, lambda = 50, mu=5}{$\sqrt{\mathbb{V}N},\lambda=50,\mu=5$}

\includegraphics[width=.42\textwidth]{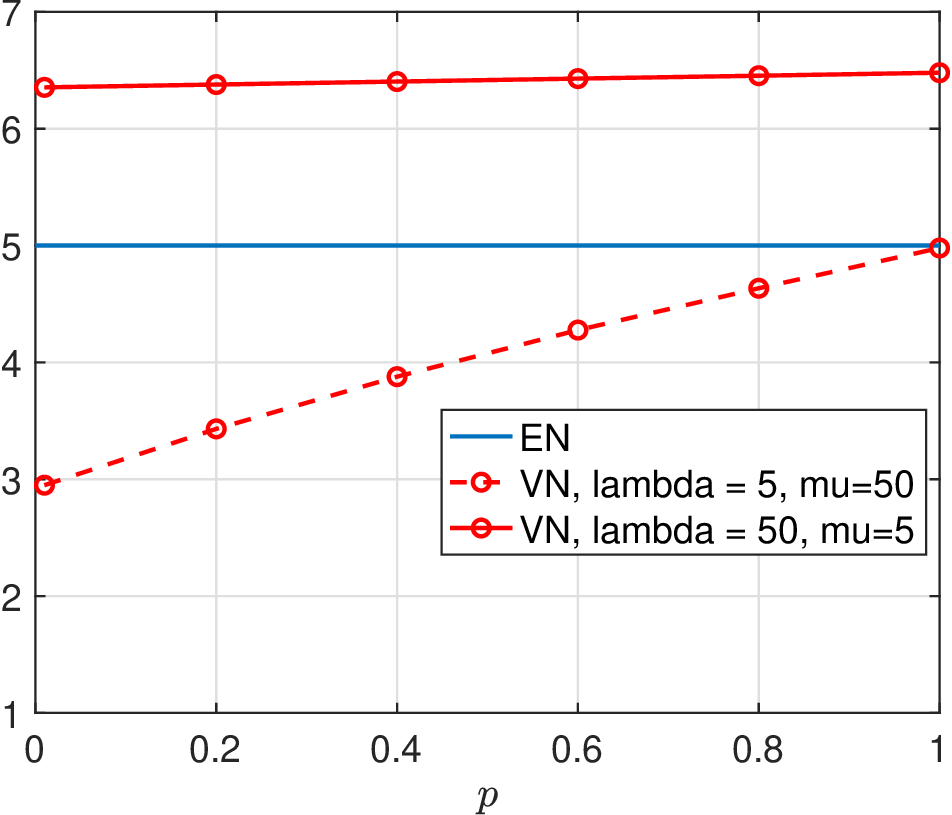}
         \label{fig: N_o}}
     \hfill
     \subfigure[Mean and the standard deviation of $M$. ]{ 
     \psfrag{EM, rho = 01}{$\mathbb{E}M,\lambda=5,\mu=50$}
          \psfrag{EM}{$\mathbb{E}M,\lambda=50,\mu=5$}
          \psfrag{sqrt{{V}M}, rho = 0111}{$\sqrt{\mathbb{V}M},\lambda=5,\mu=50$}
  \psfrag{sqrt{{V}M}, rho = 10102}{$\sqrt{\mathbb{V}M},\lambda=50,\mu=5$}
\includegraphics[width=.42\textwidth]{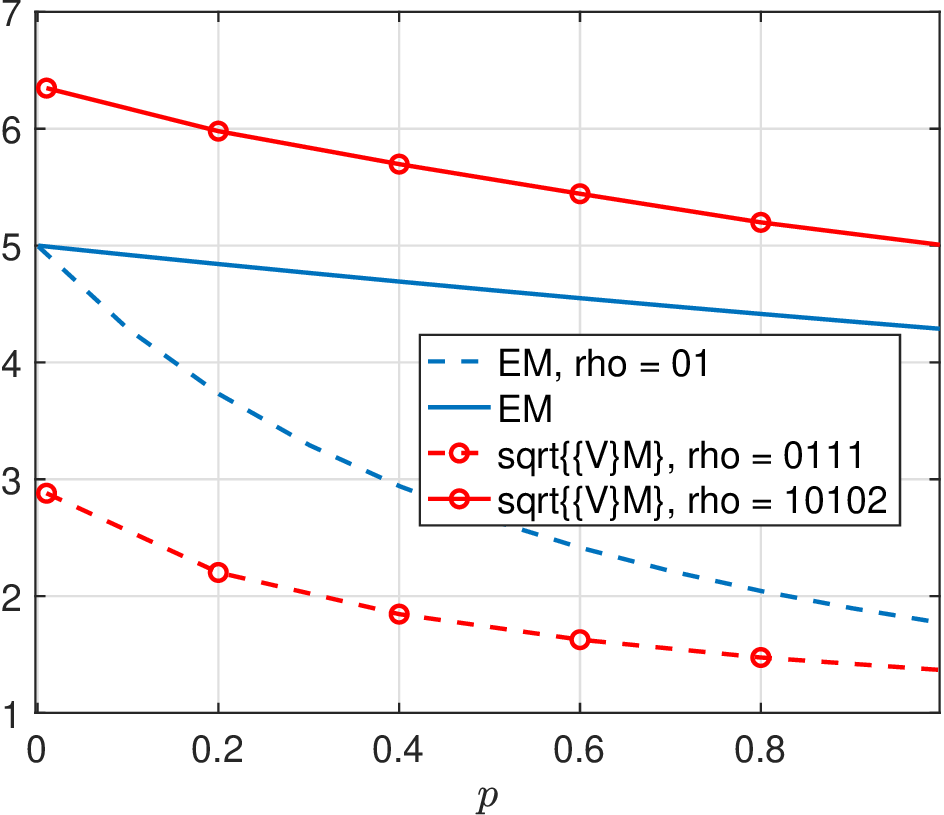}
         \label{fig: illu-g}}
           \caption{Mean and standard deviation of $N$ and $M$ for $f =\ind{(r\leq 0.2122)}$. We compare $(\lambda,\mu) = (5,50), (50,5)$ respectively. $\Ex N=5$. $\sqrt{\Var{M}}$ is obtained via simulation whereas the rest via Corollary 2.}
         \label{fig: N_oM_o_disk}
   \end{figure*}

  \begin{corollary}
  \label{cor: dispersed-mean-degree}
   For all $f$, $\Ex^{} M$ for $f_p$ monotonically decreases with $p$. For all $f$, $\Ex^{} M$ monotonically increases with $\lambda/\mu$ for fixed $\lambda\mu$. Further,
   
 \setcounter{equation}{\value{MYtempeqncnt}+3}
  \begin{equation}
 \lim_{p\to 0} \Ex^{} M = \Ex^{} N.
  \end{equation}
  \begin{equation}
 \lim_{\frac{\lambda}{\mu}\to \infty} \Ex^{} M = \Ex^{} N.
  \end{equation}
 \end{corollary}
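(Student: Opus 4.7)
The plan is to reduce all four assertions to one analytic fact: the function $\phi(t) := (1 - e^{-t})/t$ is positive and strictly decreasing on $(0,\infty)$ with $\phi(0^+) = 1$. Setting $g(x) := \int_{\R^d} f(\|y\|)f(\|x-y\|)\dd y$, formula (\ref{eq: EM}) rewrites as
\[
\Ex M \;=\; \lambda \int_{\R^d}\bigl(1 - e^{-\mu g(x)}\bigr)\dd x \;=\; \lambda\mu \int_{\R^d} g(x)\,\phi(\mu g(x))\dd x,
\]
while Fubini on the definition of $g$ gives $\int_{\R^d} g(x)\dd x = \bigl(dc_d \int_0^\infty f(r) r^{d-1}\dd r\bigr)^2$, so by (\ref{eq: EN}) we also have $\Ex N = \lambda\mu \int_{\R^d} g(x)\dd x$. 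Thus $\Ex M$ and $\Ex N$ are integrals of $\phi(\mu g)$ and of $1$ respectively against the same positive measure $\lambda\mu\,g(x)\dd x$, so both monotonicity statements and both limits become questions about how the argument $\mu g$ varies with the parameter at hand.

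For the ratio claim I fix $c := \lambda\mu$ and substitute $\lambda = c/\mu$ to obtain $\Ex M = c \int_{\R^d} g(x)\,\phi(\mu g(x))\dd x$. Decreasing $\mu$, i.e.\ increasing $\lambda/\mu = c/\mu^2$, makes $\mu g(x)$ pointwise smaller and hence $\phi(\mu g(x))$ pointwise larger, so $\Ex M$ increases. Letting $\mu \downarrow 0$, monotone convergence gives $\Ex M \to c \int g(x)\dd x = \Ex N$.

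For the dispersion claim the key computation is a scaling identity for $g$ under $f \mapsto f_p$. The substitution $u = \sqrt[d]{p}\,y$ in $g_p(x) := \int_{\R^d} f_p(\|y\|)f_p(\|x-y\|)\dd y$ yields $g_p(x) = p\,g(\sqrt[d]{p}\,x)$; plugging this into (\ref{eq: EM}) and then changing variables $v = \sqrt[d]{p}\,x$ produces
\[
\Ex M \;=\; \frac{\lambda}{p}\int_{\R^d}\bigl(1 - e^{-\mu p\, g(v)}\bigr)\dd v \;=\; \lambda\mu \int_{\R^d} g(v)\,\phi(\mu p\, g(v))\dd v,
\]
where the left-hand side refers to $\Ex M$ computed with $f_p$. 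This last expression is decreasing in $p$ by the monotonicity of $\phi$, which proves the first half of the dispersion claim; letting $p \downarrow 0$ and applying monotone convergence once more gives the limit $\lambda\mu \int g(v)\dd v = \Ex N$.

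The only step requiring any care is the scaling identity $g_p(x) = p\,g(\sqrt[d]{p}\,x)$, since the dispersion couples an amplitude factor $p$ with a spatial rescaling $\sqrt[d]{p}$ and the two effects must cancel correctly under the change of variables; once this identity is in place, both monotonicity statements and both limits follow from the single monotonicity of $\phi$ together with monotone convergence.
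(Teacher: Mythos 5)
Your proof is correct and follows essentially the same route as the paper: the same change of variables reducing $f_p$ to a rescaling $(\lambda,\mu)\mapsto(\lambda/p,\mu p)$, and the same key observation that $(1-e^{-t})/t$ is decreasing, from which all four claims follow. The only cosmetic differences are that you justify the limits by monotone convergence where the paper uses a series expansion of the integrand, and that you spell out the $\lambda/\mu$ statements explicitly where the paper derives them from the scaling equivalence.
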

 \begin{proof}
Let us write $\Ex^{} M$ for $f_p$  as
 \begin{equation}
\begin{split}\nonumber
   &\Ex^{} M\\
   & = \lambda\int_{\mathbb{R}^d}  1-\exp\left(-\mu p^2\int_{\mathbb{R}^d}f(\sqrt[d]{p}\|y\|)f(\sqrt[d]{p}\|x-y\|)\dd y\right)\dd x\\
   &  \peq{a} \frac{\lambda}{p}\int_{\mathbb{R}^d}   1-\exp\left(-\mu p\int_{\mathbb{R}^d}f(\|v\|)f(\|u-v\|)\dd v\right)\dd u.
 \end{split}
 \end{equation}
Step (a) follows from change of variables $u=\sqrt{p}x,~v=\sqrt{p}y$.
The last equation monotonically decreases with the increase of $p$ since $(1-\exp(-C x))/x$ monotonically decreases with $x$.
Further, this shows that for $\Ex M$, stretching $f$ to $f_p$ is equivalent to scaling the network by $\lambda' = \lambda/p$ and $\mu'=\mu p$.
As $p\to0$, all terms in the series expression of the integrand containing $p$ go to 0, hence the convergence to $\Ex N$.
 \end{proof}

\begin{remark}Recall that $N=M$ when each pair of agents are connected through at most one hub. 
 For fixed $\lambda\mu$, as $\lambda/\mu\to\infty$, the number of hubs for the typical agent decreases per (\ref{eq: EN_hub}). Intuitively, most agents only have an edge to their closest hub. 
 On the other hand, as $p\to0$, two agents in proximity are less unlikely to be connected through common hubs, and geometry has a diminishing impact on connections. The graph eventually becomes an arbitrarily large abstract random graph.  Since $N-M$ is a non-negative integer, Corollary 1 also implies the convergence of $M$ to $N$ in probability.
\end{remark}

 \subsection{Examples}
   In this subsection, we consider an example with $d=2$ and two special connection functions. 
    
    The first is\begin{equation}
  \nonumber  f_1(r) = \ind{(r\leq \theta)},\quad r\geq 0,\label{eq: disk}
      \end{equation} 
      where $\theta>0$ is the cut-off radius for drawing an edge between a agent and a hub. Two agents potentially have common hubs only if the intersection of their disks of radius $\theta$ are nonempty regardless of $\Psi$.
      
The second is
   \begin{equation}
\nonumber        f_2(r) =\frac{1}{2}\exp(-r/\theta),\quad r\geq 0.
\label{eq: exp dist}
\end{equation}
which has infinite support.

For both connection functions, the degree of the typical agent is Poisson distributed with mean 
    \begin{align}
      \Ex \sum_{y\in\Psi}I(o,y)&= \int_{0}^{\infty}\exp(- r/\theta)\pi\mu r \dd r\nonumber\\
      &=\int_{0}^{\infty} 2\pi\mu r \ind{(r\leq \theta)} \dd r\nonumber\\
      &= {\pi\mu}{\theta^2}.\nonumber\label{eq: EN_hub}
    \end{align}

\begin{corollary}
\label{cor: examples-N-M}
 For $f_1(r)= \ind{(r\leq \theta)}$, $\Ex N = \lambda\mu\pi^2\theta^4$, and 
 \begin{equation}
        \Ex M 
         = \int_{0}^{2\theta}2\pi\lambda v\left(1-e^{-\mu (2\theta^2\arccos(v/2\theta)-v\sqrt{\theta^2-v^2/4})}\right)\dd v.
        \label{eq: E M_0 disk}
\end{equation}
For $f_2(r)= \exp(-r/\theta)/2$,  $\Ex N = {\lambda\mu\pi^2}{\theta^4},$
and
\begin{align}
        \Ex M = \int_{0}^{\infty}2\pi\lambda v\left(1-e^{-\frac{\mu\pi v}{16}{({2\theta K_1(v/\theta)}+vK_0(v/\theta)})}\right)\dd v.\label{eq: E M_0 exp}
        \end{align}
where $K_n(\cdot)$ is the modified Bessel function of the second kind.

\end{corollary}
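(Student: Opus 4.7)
The plan is to specialize Theorem~\ref{thm: N,M} to $d=2$ (so $c_d=\pi$) and the two specific connection functions, reducing everything to concrete planar integrals. The computation of $\Ex N$ is immediate from (\ref{eq: EN}), which requires only $\int_0^\infty f(r)\,r\,\dd r$. For $f_1$ this integral is $\theta^2/2$, and for $f_2$ the identity $\int_0^\infty r e^{-r/\theta}\dd r = \theta^2$ gives $\theta^2/2$ as well, so both cases produce $\Ex N = \lambda\mu(2\pi\cdot\theta^2/2)^2 = \lambda\mu\pi^2\theta^4$.

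For $\Ex M$, formula (\ref{eq: EM}) combined with Remark~2 shows that $h(v)\triangleq\int_{\R^2}f(\|y\|)f(\|x-y\|)\dd y$ depends on $x$ only through $v=\|x\|$. Passing to polar coordinates in the outer integral rewrites (\ref{eq: EM}) as $\Ex M = 2\pi\lambda\int_0^\infty v\bigl(1-e^{-\mu h(v)}\bigr)\dd v$, so the work reduces to identifying $h$ in each case. For $f_1$, $h(v)$ is exactly the area of intersection (``lens'') of two disks of radius $\theta$ whose centers are at distance $v$: the lens is empty for $v>2\theta$, which truncates the outer integral to $[0,2\theta]$, and for $v\leq2\theta$ the standard lens-area formula gives $h(v) = 2\theta^2\arccos(v/(2\theta)) - v\sqrt{\theta^2 - v^2/4}$, yielding (\ref{eq: E M_0 disk}).

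The technical core is the $f_2$ case, where $h(v) = \tfrac14(g*g)(v)$ with $g(y) = e^{-\|y\|/\theta}$ on $\R^2$. I would evaluate this via the Hankel transform of order zero: the 2D Fourier transform of $g$ is $\hat g(k) = 2\pi\theta^{-1}(k^2+\theta^{-2})^{-3/2}$, so $\widehat{g*g}(k) = 4\pi^2\theta^{-2}(k^2+\theta^{-2})^{-3}$, and inverting with the tabulated identity
\[\int_0^\infty\frac{k\,J_0(vk)}{(k^2+\theta^{-2})^3}\dd k = \frac{v^2\theta^2}{8}\,K_2(v/\theta)\]
gives $(g*g)(v) = \tfrac{\pi v^2}{4}K_2(v/\theta)$, whence $h(v) = \tfrac{\pi v^2}{16}K_2(v/\theta)$. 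The recurrence $K_2(z)=K_0(z)+(2/z)K_1(z)$ then rewrites $h$ as $\tfrac{\pi v}{16}\bigl(2\theta K_1(v/\theta)+v K_0(v/\theta)\bigr)$, and substitution produces (\ref{eq: E M_0 exp}). The only real obstacle is this Bessel-function identity; as a fallback one can reach it directly by switching to bipolar coordinates in the original convolution and using $K_\nu(z)=\int_0^\infty e^{-z\cosh t}\cosh(\nu t)\dd t$, avoiding the transform step entirely.
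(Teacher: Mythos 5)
Your proposal is correct: the values of $\Ex N$, the lens-area formula for $f_1$, and the final Bessel expression for $f_2$ all check out (I verified that $\tfrac{\pi v^2}{16}K_2(v/\theta)$ equals $\tfrac{\pi v}{16}\bigl(2\theta K_1(v/\theta)+vK_0(v/\theta)\bigr)$ via $K_2(z)=K_0(z)+\tfrac{2}{z}K_1(z)$, and that this matches the paper's exponent). The $\Ex N$ computation and the $f_1$ case coincide with the paper's argument. Where you genuinely diverge is the $f_2$ case. The paper does not compute the convolution $h(v)=\tfrac14(g*g)(v)$ by Fourier/Hankel inversion; instead it exploits the multiplicative structure $f_2(\|y\|)f_2(\|x-y\|)=\tfrac14 e^{-(\|y\|+\|x-y\|)/\theta}$, maps $\Psi$ under $y\mapsto\|y\|+\|x-y\|$ to a one-dimensional PPP whose intensity $\lambda_x(r)=\tfrac{\mu\pi}{4}\tfrac{2r^2-\|x\|^2}{\sqrt{r^2-\|x\|^2}}$ is obtained by differentiating the area of the confocal ellipse $\{\|y\|+\|x-y\|\le r\}$, and then applies the one-dimensional PGFL, reducing everything to $\int_{\|x\|}^\infty\lambda_x(r)\tfrac14 e^{-r/\theta}\dd r$, which yields $K_0$ and $K_1$ directly from their integral representations. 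Your transform route buys generality --- it works for any radial connection function with a known Hankel transform and does not rely on $f$ factoring through $\|y\|+\|x-y\|$ --- at the cost of importing a tabulated identity (G\&R 6.565.4) and a recurrence. The paper's route is more self-contained and probabilistic but is tailored to the exponential. Your stated fallback via elliptic/bipolar coordinates and $K_\nu(z)=\int_0^\infty e^{-z\cosh t}\cosh(\nu t)\dd t$ is essentially the paper's Lemma on the 1D PPP in analytic disguise, so the two arguments meet there.
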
    
\begin{proof}
For $f_1$, note that only agents within radius $2\theta$ can be connected to the typical agent $o$. Then the expression follows from the area of the intersection of two disks with radius $\theta$. For $f_2$, see Appendix \ref{appendix: examples-N-M}. 
\end{proof}
\begin{remark}
The integrand in $\Ex M$ has bounded support if and only if the connection function $f$ has bounded support, in which case its support is twice the support of $f$. 
\end{remark}

The numerical evaluations of (\ref{eq: E M_0 disk})  and (\ref{eq: E M_0 exp}) take a few millisecond using Matlab. $\Ex M$ increases linearly with $\lambda$ and sublinearly with $\mu$. As $\mu\to0$, $\Ex M$ is a linear function of $\mu$ asymptotically (see Remark 1). 
Fig. \ref{fig: N_oM_o_disk} plots the mean and variance of $N,~M$ for a fixed $\Ex N$ and $f_p(r)= pf(\sqrt{p}r)$, $p\in(0,1]$.

\section{Percolation}
  
    \subsection{Definitions}
We consider the percolation of the RBG graph where $\Phi$ and $\Psi$ are two independent PPPs with intensities $\lambda$ and $\mu$, respectively. We say that an infinite graph percolates if there exists an infinitely connected component in the graph a.s., or equivalently, the probability of the origin belonging to an infinite component is larger than 0.
 From the point of view of disease spreading, it means that a positive fraction of the network can eventually get infected from a single agent (ignoring the delay of spreading and recovery).  The RBG graph  model is ergodic and has a unique infinite component with probability 0 or 1 following the arguments in \cite[Theorems 2.1\&6.3]{MeesterRonald1996CP}. 
    

For a fixed $f$, let the percolation region be defined as
\begin{equation}
    \mathcal{D} \triangleq \{(\lambda,\mu): \mathcal{G}\left(\Phi,\Psi, f\right) ~\mathrm{percolates}\}.\nonumber
\end{equation}
   The boundary of $\mathcal{D}$ is  symmetrical, i.e., \[(\lambda,\mu)\in\mathcal{D} \quad \Leftrightarrow \quad (\mu,\lambda)\in\mathcal{D},\]
where $\Leftrightarrow$ denotes an implication in both directions.
    To see this, note that having an infinite component formed from a subset of $\Phi$ is equivalent to having  an infinite component formed from a subset of $\Psi$. Further, $\Phi$ and $\Psi$ are both PPPs.
    
For any fixed $\mu$, let $\lambda_c(\mu)$ be defined as \[
        \lambda_c(\mu) \triangleq \inf \{\lambda: (\lambda,\mu)\in \mathcal{D}\},\]
and $\mu_c$ be defined as
    \begin{equation}
        \mu_c \triangleq \inf \{\mu: \lambda_c(\mu)<\infty\}.\nonumber
    \end{equation} 
 Here our notations differ from \cite{penrose_2014} due to the physical interpretation of $\lambda$ and $\mu$ and that we focus on the critical density of hubs. The change of notation is not fundamental due to the symmetry of $\lambda$ and $\mu$ in our model.

 Consider a random unipartite geometric graph on a  PPP $\Phi$ with intensity $\lambda$, $\mathcal{G}(\Phi,f)$. 
 Denote by $\zeta_f$ the critical density for the percolation of the graph $\mathcal{G}(\Phi,f)$, i.e.,
 \[
\zeta_f \triangleq \inf \{\lambda\colon \mathcal{G}\left(\Phi,f\right) ~\mathrm{percolates}\}
 \]
 
 For $\int_{0}^{\infty}f(r)r^{d-1}\dd r<\infty$,  it is known that \cite[Chapter 6]{MeesterRonald1996CP}
  \[
  0<  \zeta_f<\infty.
    \] 
    Further, there is at most one unbounded component.



    \subsection{Some Bounds on the Percolation Thresholds}

        \begin{lemma}  
    The percolation region $\mathcal{D}$ for $\mathcal{G}(\Phi,\Psi,f)$ satisfies 
\begin{equation}
\min\{\lambda+\mu\colon (\lambda,\mu)\in \mathcal{D}\}\geq \zeta_f.
\end{equation}
    \end{lemma}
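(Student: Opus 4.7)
The plan is to prove this via a coupling/domination argument that compares the RBG graph to a unipartite random geometric graph built on the superposition $\Phi \cup \Psi$. The key observation is that $\Phi \cup \Psi$ is a PPP of intensity $\lambda + \mu$ (by independence of the two PPPs and the superposition property), so its associated unipartite geometric graph $\mathcal{G}(\Phi \cup \Psi, f)$ has a well-defined critical density $\zeta_f$. I would construct the coupling as follows: use the same edge indicators $I(x,y)$ for every bipartite pair $(x,y) \in \Phi \times \Psi$ in both graphs, and independently add Bernoulli$(f(\|x-x'\|))$ edges for all same-type pairs $(x,x') \in \Phi \times \Phi$ and $\Psi \times \Psi$ to obtain $\mathcal{G}(\Phi \cup \Psi, f)$.

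Under this coupling, the edge set of $\mathcal{G}(\Phi,\Psi,f)$ is a subset of the edge set of $\mathcal{G}(\Phi \cup \Psi, f)$. Hence every connected component of the RBG graph is contained in a connected component of the unipartite graph on $\Phi \cup \Psi$. In particular, if $\mathcal{G}(\Phi,\Psi,f)$ percolates (i.e., has an infinite connected component almost surely), then so does $\mathcal{G}(\Phi \cup \Psi, f)$.

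By the characterization of $\zeta_f$ recalled just before the lemma, percolation of $\mathcal{G}(\Phi \cup \Psi, f)$ requires the intensity of the underlying PPP to be at least $\zeta_f$, i.e., $\lambda + \mu \geq \zeta_f$. Taking the infimum over $(\lambda,\mu) \in \mathcal{D}$ then gives the claimed bound.

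The proof is short and the only delicate point is verifying the coupling is valid, in particular that the bipartite edge indicators $I(x,y)$ retain their joint (conditionally independent) law when viewed as a subset of the edge indicators of the unipartite graph; this is immediate since we use the \emph{same} indicators on both sides and add independent extra edges only within $\Phi$ or within $\Psi$. No real obstacle beyond setting up this coupling carefully, together with citing the fact that the superposition of independent PPPs is a PPP with the sum intensity.
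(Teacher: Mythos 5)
Your proposal is correct and follows essentially the same coupling argument as the paper: both dominate $\mathcal{G}(\Phi,\Psi,f)$ by the unipartite graph $\mathcal{G}(\Phi\cup\Psi,f)$ on the superposition (a PPP of intensity $\lambda+\mu$) and conclude that percolation of the former forces $\lambda+\mu\geq\zeta_f$. Your explicit verification that the coupling preserves the law of the bipartite edge indicators is a detail the paper leaves implicit, but it is the same proof.
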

\begin{proof}
        Construct the following coupling: let $\mathcal{G}\left(\Phi\cup\Psi,f\right)$ be a unipartite random geometric graph with nodes  being the union of agents and hubs from $\mathcal{G}\left(\Phi,\Psi,f\right)$. The edge set of $\mathcal{G}\left(\Phi\cup\Psi,f\right)$ contains all edges of $\mathcal{G}\left(\Phi,\Psi,f\right)$, as well as those edges within agents and within hubs generated through $f$. So  $\mathcal{G}\left(\Phi\cup\Psi,f\right)$ dominates $\mathcal{G}\left(\Phi,\Psi,f\right)$ in the sense of the coupling just mentioned. So the percolation of $\mathcal{G}\left(\Phi,\Psi,f\right)$ implies the percolation of  $\mathcal{G}\left(\Phi\cup\Psi,f\right)$.
\end{proof}  
\begin{remark}
A special case for $f(r) = \ind (r\leq a) $ is proved in \cite{iyer2012percolation} with the same construction of coupling.    
\end{remark}

With a slight abuse of notation, let $\zeta(2a)$ denote the percolation threshold of $\mathcal{G}(\Phi, \ind (r\leq 2a))$ (the Poisson Boolean model). 

\begin{lemma}{\cite[Theorem 1.1]{penrose_2014}}\label{lemma: penrose}
    For the RBG model with $f(r) = \ind (r\leq a)$,      \[\mu_c = \zeta(2a).\]   
\end{lemma}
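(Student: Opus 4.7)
The plan is to prove both $\mu_c \geq \zeta(2a)$ and $\mu_c \leq \zeta(2a)$ by comparing the RBG graph to Poisson Boolean models on $\Psi$ alone. The first move is to reformulate percolation: the RBG graph percolates iff the \emph{hub-adjacency graph} on $\Psi$ percolates, where $y_1,y_2 \in \Psi$ are declared adjacent iff $\Phi \cap B(y_1,a) \cap B(y_2,a) \neq \emptyset$. Since a.s.\ each agent lies in the $a$-balls of only finitely many hubs, an infinite agent-component uses infinitely many hubs and yields an infinite hub-adjacency component, and vice versa, so the two percolation events coincide a.s.

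For the lower bound $\mu_c \geq \zeta(2a)$, any hub-adjacency edge forces $\|y_1 - y_2\| \leq 2a$, so the hub-adjacency graph is a subgraph of the Poisson Boolean model $\mathcal{G}(\Psi, \mathbf{1}(r \leq 2a))$. If $\mu < \zeta(2a)$ the latter does not percolate, so neither does any subgraph, regardless of $\lambda$.

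For the upper bound $\mu_c \leq \zeta(2a)$, fix $\mu > \zeta(2a)$ and seek a finite $\lambda$ giving percolation. By the scaling relation $\zeta(r)\, r^d = \text{const}$, the map $r \mapsto \zeta(r)$ is continuous, so one can pick $\epsilon > 0$ with $\mu > \zeta(2a - \epsilon)$; the auxiliary Boolean model $\mathcal{G}(\Psi, \mathbf{1}(r \leq 2a - \epsilon))$ then percolates a.s. Every auxiliary edge has a lens $B(y_1,a) \cap B(y_2,a)$ containing a ball of radius $\epsilon/2$, hence of volume at least $v_\epsilon := c_d(\epsilon/2)^d$. The remaining task is to show that for large $\lambda$ enough auxiliary edges are realised in the hub-adjacency graph to preserve an infinite component. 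I would handle this by coarse-graining: tile $\R^d$ into boxes of side $L$ and call a box \emph{good} if every pair of hubs at mutual distance $\leq 2a - \epsilon$ inside a bounded enlargement of the box admits an agent in its lens. For $\lambda$ large and a typical local $\Psi$-configuration, a box is good with probability close to $1$; since goodness depends on $(\Phi,\Psi)$ only in a bounded window, the good-box field has finite-range dependence and, by the Liggett--Schonmann--Stacey domination theorem, stochastically dominates a supercritical Bernoulli site percolation. Inside the resulting infinite good-box cluster the hub-adjacency graph contains every auxiliary edge, so the infinite component of the auxiliary Boolean model becomes an infinite hub-adjacency component, and the RBG graph percolates.

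The main obstacle is exactly this coarse-graining step. The edge-existence events in the hub-adjacency graph are correlated through the shared agent process $\Phi$, so independent thinning of the auxiliary Boolean model is unavailable; the finite-range dependence produced by coarse-graining is what makes the Liggett--Schonmann--Stacey domination applicable. A further delicate point is to guarantee that the auxiliary Boolean model's infinite cluster intersects the infinite good-box cluster in an infinite connected subpath rather than in a sparse, disconnected set, which typically requires a Peierls-type contour estimate combined with the FKG inequality.
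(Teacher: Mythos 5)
First, note that the paper does not actually prove this lemma: it is imported verbatim as \cite[Theorem 1.1]{penrose_2014}, and the paper's ``proof'' is a one-line pointer to Penrose's coupling with Bernoulli site percolation on a fine lattice. Your blind attempt is therefore a reconstruction of that external argument rather than an alternative to anything in the paper. Your reformulation via the hub-adjacency graph (hubs adjacent iff their $a$-lens contains an agent) is valid, since a.s.\ every hub has finitely many agent neighbours and vice versa, and your lower bound $\mu_c\geq\zeta(2a)$ is complete and correct: every hub-adjacency edge forces $\|y_1-y_2\|\leq 2a$, so the hub-adjacency graph is dominated by $\mathcal{G}(\Psi,\ind(r\leq 2a))$ uniformly in $\lambda$. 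The scaling step $\zeta(r)r^d=\mathrm{const}$ and the choice of $\epsilon$ with $\mu>\zeta(2a-\epsilon)$ are also fine.

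The genuine gap is in the final gluing step of the upper bound. Your ``good box'' event constrains only the agent process: it guarantees that every auxiliary edge with both endpoints near a good box is realised, but it says nothing about the hub configuration inside the good region. An infinite cluster of good boxes therefore does \emph{not} yield an infinite connected component of the hub-adjacency graph, because the infinite cluster of the auxiliary Boolean model $\mathcal{G}(\Psi,\ind(r\leq 2a-\epsilon))$ may repeatedly exit the good region, and its restriction to the good boxes can be a disconnected, even finite, collection of pieces. You flag this yourself, but the proposed patch (a Peierls contour estimate plus FKG) does not address it: FKG gives positive correlation of increasing events, whereas what is needed is a structural statement about the supercritical cluster. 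The standard repair is to fold the hub geometry into the definition of goodness: declare a box good only if, in addition to the agent condition, the auxiliary Boolean model restricted to a bounded enlargement of the box contains a crossing cluster in each coordinate direction and all sufficiently large clusters in that window are connected to one another (the ``local uniqueness'' property of supercritical continuum percolation). That event still has probability tending to $1$ as the box size and then $\lambda$ grow, still has finite range of dependence, and then adjacent good boxes genuinely chain their crossing clusters into one infinite hub-adjacency component. Local uniqueness is a real (if standard) ingredient that your sketch omits, so as written the upper bound is not established.
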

\begin{proof}
The proof is given in \cite{penrose_2014} through constructing a coupling of the RBG model with Bernoulli site percolation.
\end{proof}
\begin{remark}
    A special case for $d=2$ is proved in \cite{iyer2012percolation}.
\end{remark}
Here we prove that the same threshold holds for the connection functions $f(r) = p\ind (r\leq a)$, $p\in(0,1]$, which generalizes the Boolean connection function in \cite{iyer2012percolation,penrose_2014}. 

 \begin{theorem}
 For the RBG model with the connection functions $f(r) = p\ind (r\leq a)$, $p\in(0,1]$, 
 \label{thm: perco-th-p-boolean}
\begin{equation}
\mu_c = \zeta(2a).
 \end{equation}

 \end{theorem}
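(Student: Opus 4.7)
The plan is to establish the two bounds $\mu_c \ge \zeta(2a)$ and $\mu_c \le \zeta(2a)$ separately, working throughout with the auxiliary unipartite hub graph $\mathcal{G}_\Psi$ on $\Psi$ in which two hubs are joined whenever they share a common agent in $\mathcal{G}(\Phi,\Psi,f)$. Since under the PPP assumption each hub a.s.\ has finitely many agent neighbors (a Bernoulli-$p$ thinning of a Poisson count in a ball of radius $a$), the bipartite RBG graph percolates if and only if $\mathcal{G}_\Psi$ percolates; I record this equivalence first as it reduces the question to percolation of a graph on a single PPP.

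For the lower bound, I observe that whenever two hubs $y_1,y_2$ are adjacent in $\mathcal{G}_\Psi$ via a shared agent $x$, the triangle inequality forces $\|y_1-y_2\|\le 2a$. Hence $\mathcal{G}_\Psi$ is a subgraph of the Poisson Boolean model $\mathcal{G}(\Psi,\ind(r\le 2a))$, which does not percolate for $\mu<\zeta(2a)$. This yields $\mu_c\ge \zeta(2a)$ uniformly in $p\in(0,1]$, extending the coupling used to obtain Lemma 2.

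For the upper bound, fix any $\mu>\zeta(2a)$. Using the scaling identity $\zeta(r)=\zeta(1)/r^d$, I pick $\epsilon>0$ small enough that $\mu>\zeta(2a-\epsilon)$, so that $\mathcal{B}_\epsilon\triangleq \mathcal{G}(\Psi,\ind(r\le 2a-\epsilon))$ percolates strictly. The minimum lens volume $V_\epsilon\triangleq \inf\{|B(y_1,a)\cap B(y_2,a)|:\|y_1-y_2\|\le 2a-\epsilon\}$ is strictly positive, so, conditionally on $\Psi$, each edge of $\mathcal{B}_\epsilon$ is also an edge of $\mathcal{G}_\Psi$ with probability at least $1-\exp(-\lambda p^2 V_\epsilon)$, which tends to $1$ as $\lambda\to\infty$. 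I then execute a coarse-graining argument: partition $\mathbb{R}^d$ into cubes of side $L$ with an $a$-thick buffer, and declare a cube good when (i) $\mathcal{B}_\epsilon$ admits the usual face-to-face crossing cluster through the enlarged cube, and (ii) every pair of hubs at distance $\le 2a-\epsilon$ appearing in that crossing actually shares an agent. Strict supercriticality of $\mathcal{B}_\epsilon$ makes (i) occur with probability close to $1$ for $L$ large, and a union bound over the polynomially many hub-pairs inside the cube makes (ii) occur with probability close to $1$ once $\lambda$ is taken large. Good cubes form a finite-range-dependent site field that, by the Liggett--Schonmann--Stacey domination theorem, stochastically dominates supercritical Bernoulli site percolation on $\mathbb{Z}^d$, and adjacent good cubes automatically have their crossing clusters joined inside $\mathcal{G}_\Psi$, producing an infinite component and hence percolation of the RBG graph.

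The main obstacle is the correlation between the agent-sharing indicators: two nearby hub pairs draw their potential bridging agents from overlapping lenses, so the edges of $\mathcal{G}_\Psi$ are positively correlated and one cannot simply reduce the problem to a Poisson random connection model on $\Psi$ with connection function $1-\exp(-\lambda p^2 V(r))$. The coarse-graining finesses this by localizing each good-cube event to a bounded window; choosing the buffer so that the agent regions consulted by non-adjacent cubes are disjoint restores independence at the macroscopic scale, after which the standard dependent-percolation toolkit closes the argument.
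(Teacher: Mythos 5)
Your proposal is correct, and it reaches both inequalities by a route that is organized quite differently from the paper's, even though it rests on the same key new ingredient: a bridging agent must realize two independent Bernoulli($p$) edges, so the effective density of usable agents in a lens is $\lambda p^2$ (your $1-\exp(-\lambda p^2 V_\epsilon)$ is exactly the paper's replacement of $1-\exp(-\lambda\epsilon^d)=q$ by $1-\exp(-\lambda p^2\epsilon^d)=q$). The paper simply invokes Penrose's proof of Lemma~\ref{lemma: penrose} — a coupling with Bernoulli site percolation on a fine lattice $\epsilon\mathbb{Z}^d$ driven by the agent process — and modifies that one parameter; it is short but entirely parasitic on the cited argument. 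You instead rebuild the result from scratch: the hub graph $\mathcal{G}_\Psi$ with the triangle-inequality subgraph comparison gives the lower bound $\mu_c\geq\zeta(2a)$ cleanly and uniformly in $p$, and a large-box crossing renormalization of the strictly supercritical Boolean model $\mathcal{B}_\epsilon$, combined with Liggett--Schonmann--Stacey domination, gives the upper bound. Your version is longer but self-contained, and it has the virtue of naming and neutralizing the positive correlation between agent-sharing indicators (which the paper never discusses) by localizing the good-cube events. Two loose ends you should tighten, both routine: the number of hub pairs in a cube is Poisson, not deterministically polynomial, so the good-cube event must also cap the hub count (or you integrate the union bound against the Poisson tail); and the crossing condition (i) needs the standard uniqueness clause (e.g.\ any two clusters of large diameter in the enlarged cube are connected) so that the crossings of adjacent good cubes are guaranteed to glue. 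Neither affects the validity of the argument.
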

\begin{proof}
   The proof is based on the proof in \cite[Theorem 1.1]{penrose_2014}, with one modification. For the Bernoulli site percolation of $\epsilon\mathbb{Z}^2$ with retaining probability $q<1$ coupled with $\mathcal{Q}_{\mu}$ (corresponding to $\Phi$ with density $\lambda$ in our notation), we let $\lambda$ be such that 
   $1-\exp(-\lambda p^2 \epsilon^d) = q$ rather than requiring that $1-\exp(-\lambda \epsilon^d) = q$. The modification makes sure that if we independently remove points in $\Phi$ with probability $p^2$ (to account for the sampling of the two-edge path), the probability of having a point within each cube of size $\epsilon^d$ is  $q$.
\end{proof}

\begin{corollary}    \label{cor: bound_lambda_c_general_f}
    For a monotone decreasing $f$ with bounded support on $[0,a]$ and  lower bounded away from 0, 
    \[\mu_c = \zeta(2a).\]
\end{corollary}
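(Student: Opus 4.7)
The plan is to use a sandwich argument based on the pointwise monotonicity of percolation with respect to the connection function, together with Theorem \ref{thm: perco-th-p-boolean} and Lemma \ref{lemma: penrose}.

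First, I would observe that because $f$ is monotone decreasing, supported on $[0,a]$, and bounded away from $0$ on its support, there exists $p\in(0,1]$ such that
\begin{equation}
p\,\ind(r\leq a)\;\leq\; f(r)\;\leq\; \ind(r\leq a),\qquad r\geq 0,\nonumber
\end{equation}
namely any $p\leq \inf_{r\in[0,a)} f(r)$. Next, I would set up a standard coupling of Bernoulli edge variables on the same realization of $(\Phi,\Psi)$: draw, independently for each pair $(x,y)\in\Phi\times\Psi$, a uniform random variable $U_{x,y}$ on $[0,1]$, and declare an edge present under connection function $h$ iff $U_{x,y}\leq h(\|x-y\|)$. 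Under this coupling, whenever $h_1\leq h_2$ pointwise, the edge set produced by $h_1$ is a subset of that produced by $h_2$, so percolation of $\mathcal{G}(\Phi,\Psi,h_1)$ implies percolation of $\mathcal{G}(\Phi,\Psi,h_2)$.

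Applying this twice to the sandwich above yields, for every $\mu$,
\begin{equation}
\mu_c\bigl(\ind(r\leq a)\bigr)\;\leq\; \mu_c(f)\;\leq\; \mu_c\bigl(p\,\ind(r\leq a)\bigr).\nonumber
\end{equation}
By Lemma \ref{lemma: penrose} the left-hand side equals $\zeta(2a)$, and by Theorem \ref{thm: perco-th-p-boolean} the right-hand side also equals $\zeta(2a)$. Hence $\mu_c(f)=\zeta(2a)$.

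I do not expect a serious obstacle: the coupling works pathwise and both the upper and lower bounds in the sandwich have already been identified with $\zeta(2a)$. The only minor care needed is in the choice of $p$ at the boundary $r=a$; since $f$ is lower bounded away from $0$ on the interior of its support, any $p\in(0,\inf_{r<a}f(r)]$ suffices, which is exactly the hypothesis of the corollary. No additional ergodicity or uniqueness argument is required beyond those already invoked for Theorem \ref{thm: perco-th-p-boolean}.
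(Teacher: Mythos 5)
Your proposal is correct and follows essentially the same route as the paper: sandwich $f$ between $p\,\ind(r\leq a)$ and $\ind(r\leq a)$, use the standard monotone coupling of edge indicators, and conclude from Lemma \ref{lemma: penrose} and Theorem \ref{thm: perco-th-p-boolean} that both bounding thresholds equal $\zeta(2a)$. The paper states the coupling more tersely, but the argument is the same.
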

\begin{proof}
Let $\underline{f}(r) = p\ind(r\leq a)$ where $p$ is the lower bound of $f$ on the support and $\bar{f}(r)=\ind(r\leq a)$. Then  there exists a coupling between the corresponding graphs;  $\mathcal{G}\left(\Phi,\Psi,\bar{f}\right)$ dominates $\mathcal{G}\left(\Phi,\Psi,f\right)$, and $\mathcal{G}\left(\Phi,\Psi,f\right)$ dominates $\mathcal{G}\left(\Phi,\Psi,\underline{f}\right)$. 
Since for both $\underline{f}$ and $\bar{f}$, $\mu_c  = \zeta(2a)$ by Theorem \ref{thm: perco-th-p-boolean} and Lemma \ref{lemma: penrose},  the proof is complete.
\end{proof}
\begin{remark}
\label{remark: bounded_travel} 
    Corollary \ref{cor: bound_lambda_c_general_f} shows that if  the traveling distance is bounded, it suffices to limit the density of hubs to prevent the disease from spreading, regardless of the density of agents.   
\end{remark}
Corollary \ref{cor: bound_lambda_c_general_f} also implies that the percolation threshold $\mu_c<\infty$ for RBG graphs with monotone decreasing connection functions with infinite support. This follows directly from the monotonicity of $\zeta(2a)$ in $a$ and the fact that $\lim_{a\to\infty}\zeta(2a)=0$.

\begin{theorem}
    For general monotone decreasing connection function $f$, 
    \begin{equation}    
    \lambda\mu\geq \left(\int_{\mathbb{R}^d}f(\|x\|) \dd x)\right)^{-2},\quad (\lambda,\mu)\in\mathcal{D}.
    \label{eq: g-w-branching-bound}
    \end{equation}
    \label{thm: g-w-branching-bound}
\end{theorem}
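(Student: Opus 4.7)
The plan is to derive the bound by dominating the component of a typical agent by a subcritical Galton--Watson branching process whose mean offspring over an agent$\to$hub$\to$agent macro-generation is exactly the quantity $\Ex N = \lambda\mu (\int f)^{2}$ computed in Theorem \ref{thm: N,M}. By Palm conditioning, we may take $o\in\Phi$; by Slivnyak's theorem the conditional distribution of $\Phi\setminus\{o\}$ together with $\Psi$ is the product of two independent PPPs of intensities $\lambda$ and $\mu$, so the analysis below proceeds under the original laws.

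I would perform a breadth-first exploration of the component of $o$, alternating agent-layers and hub-layers. Given an agent at $x$, the edges to $\Psi$ are, conditionally on $(\Phi,\Psi)$, independent Bernoulli with parameters $f(\|y-x\|)$; hence by Poisson thinning the connected hubs form a PPP of intensity $\mu f(\|y-x\|)\dd y$, with mean mass $\mu \int_{\mathbb{R}^d} f(\|z\|)\dd z$ by stationarity. The symmetric identity holds at hubs with $\lambda$ in place of $\mu$. Couple the exploration to a Galton--Watson tree by, at each newly discovered vertex, generating its offspring from a \emph{fresh} independent copy of the PPP of the opposite type, rather than pruning the previously used points of $\Phi$ or $\Psi$. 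Because this prevents siblings from being ``stolen'' by earlier branches and re-samples every edge, the coupling stochastically inflates the cluster size, and the component of $o$ in $\mathcal{G}(\Phi,\Psi,f)$ is dominated by the total progeny of the resulting tree. Two consecutive generations (agent$\to$hub$\to$agent) then have mean offspring
\[
m \;=\; \lambda\mu\Bigl(\int_{\mathbb{R}^d} f(\|x\|)\dd x\Bigr)^{2},
\]
which coincides with $\Ex N$.

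If $m<1$ the Galton--Watson tree is subcritical and extinguishes almost surely, so the cluster of $o$ is a.s.\ finite and $(\lambda,\mu)\notin\mathcal{D}$. Taking the contrapositive and passing to the closure in $(\lambda,\mu)$ yields (\ref{eq: g-w-branching-bound}). The step I expect to require the most care in a formal write-up is the stochastic-domination coupling: one must argue that substituting a fresh independent PPP at every exploration step truly inflates the offspring counts of the real BFS, which relies on Slivnyak's theorem (so Palm distributions match the original laws) and on the conditional independence of the edge indicators given $(\Phi,\Psi)$. This is the bipartite analogue of the branching argument used for the Poisson connection model in \cite[Chapter 6]{MeesterRonald1996CP} and is conceptually standard once these two ingredients are in place.
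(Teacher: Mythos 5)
Your proposal is correct and follows essentially the same route as the paper: both dominate the cluster of the typical agent by a Galton--Watson process built from fresh (unthinned) independent inhomogeneous PPPs at each exploration step, compute the two-generation mean offspring as $\Ex N = \lambda\mu\bigl(\int_{\mathbb{R}^d} f(\|x\|)\,\dd x\bigr)^{2}$, and conclude by subcriticality when $\Ex N<1$. The paper makes the domination explicit by writing the successive hubs' agent sets as thinned PPPs (e.g.\ intensity $\lambda f(\|x-y_2\|)(1-f(\|x-y_1\|))$) bounded above by the unthinned ones, which is exactly the coupling step you flag as needing care.
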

\begin{proof}
    We use the same approach as in \cite[Chapter 6]{MeesterRonald1996CP} that proves the existence of the threshold for Poisson random connection models. The idea is to establish a coupling of the RBG graph with a Galton-Watson branching process with average descendants \[\Ex N=\lambda\mu\left(\int_{\mathbb{R}^d}f(\|x\|) \dd x\right)^2.\]

    Let us start with the agent at the origin. In the first step, its connected hubs follow an inhomogeneous PPP with intensity function $\mu f(\|y\|)$, whose mean number is $\mu\int_{\mathbb{R}^d}f(\|y\|) \dd y$. In the second step, consider for each hub $y_1, y_2,...$ generated in the first step: for $y_1$, its connected agents in the second step  follow an inhomogeneous PPP  $\lambda f(\|x-y_1\|)$; for $y_2$, the connected agents in this step  follow an independent inhomogeneous PPP $\lambda f(\|x-y_2\|)(1-f(\|x-y_1\|))$, and so on. The agent set generated by each hub in this step is upper bounded by an independent inhomogeneous PPP of intensity $\lambda f(\|x-y_i\|),~i=1,2,...$. The total number of agents in the second step can be dominated by a compound Poisson r.v. whose mean is $\lambda\mu(\int_{\mathbb{R}^d}f(\|x\|) \dd x)^2 = \Ex N$. Repeating the two steps, a coupling between the RBG graph and a Galton-Watson branching process is established. Requiring $\Ex N \geq 1$ leads to Theorem \ref{thm: g-w-branching-bound}. 
\end{proof}

Theorem \ref{thm: g-w-branching-bound} uses a Galton-Watson branching process to lower bound the RBG graph, where the analogous approach applies to its unipartite counterpart. Considering the limiting result established for unipartite graphs in \cite{penrose1993spread}, one may conjecture that in the limit of dispersion, i.e., $p\to0$, the bound also becomes tight for RBG graphs. 

\begin{lemma}
For the RBG graph with general monotone decreasing connection functions,   $\mu_c>0$ if and only if $f$ has bounded support.
\label{lemma: iff_bounded_support}
\end{lemma}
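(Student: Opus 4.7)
The plan is to prove both directions by sandwiching $f$ between two Boolean-type connection functions of the form $p\,\ind(r\leq a)$, for which Theorem \ref{thm: perco-th-p-boolean} already pins down $\mu_c$.

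For the ``bounded support $\Rightarrow \mu_c>0$'' direction, suppose $f$ is supported on $[0,a]$. Since $f$ is non-increasing, $f(r)\leq f(0)\,\ind(r\leq a)\eqqcolon \bar f(r)$ for every $r\geq 0$. The standard edge-indicator coupling then shows that $\mathcal{G}(\Phi,\Psi,f)$ is dominated by $\mathcal{G}(\Phi,\Psi,\bar f)$, so percolation of the former implies percolation of the latter, giving $\mu_c(f)\geq \mu_c(\bar f)$. By Theorem \ref{thm: perco-th-p-boolean} applied with $p=f(0)\in(0,1]$, $\mu_c(\bar f)=\zeta(2a)>0$, so $\mu_c(f)>0$.

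For the converse, assume $f$ has unbounded support. Monotonicity then forces $f(a)>0$ for every $a>0$. Fix an arbitrary $\mu_0>0$. By the scaling property of the Poisson Boolean model, $\zeta(2a)=\zeta(2)/a^d\to 0$ as $a\to\infty$, so we can choose $a$ large enough that $\zeta(2a)<\mu_0$. Define the lower bound $\underline f(r)\coloneqq f(a)\,\ind(r\leq a)$; by monotonicity of $f$, $\underline f(r)\leq f(r)$ for all $r\geq 0$, so $\mathcal{G}(\Phi,\Psi,\underline f)\preceq \mathcal{G}(\Phi,\Psi,f)$ in the edge-coupling sense. Theorem \ref{thm: perco-th-p-boolean} applied to $\underline f$ (with $p=f(a)>0$) gives $\mu_c(\underline f)=\zeta(2a)<\mu_0$, and by the standard monotonicity of $\lambda_c(\mu)$ in $\mu$ (itself a coupling consequence, since a denser $\Psi$ can be thinned to a sparser one), this yields $\lambda_c^{\underline f}(\mu_0)<\infty$. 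Transferring percolation through the coupling, $\lambda_c^{f}(\mu_0)\leq \lambda_c^{\underline f}(\mu_0)<\infty$. Since $\mu_0>0$ was arbitrary, $\mu_c(f)=0$.

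The argument is essentially a sandwich around $f$ using two Boolean connection functions and an appeal to Theorem \ref{thm: perco-th-p-boolean}; there is no real obstacle. The only subtle point is the reduction from ``$\mu_c(\underline f)<\mu_0$'' to ``$\lambda_c^{\underline f}(\mu_0)<\infty$'', which uses the monotonicity of $\lambda_c(\cdot)$ in the hub intensity; this is worth stating explicitly in the write-up so that the logic bridging the infimum definition of $\mu_c$ and the finiteness of $\lambda_c(\mu_0)$ is transparent.
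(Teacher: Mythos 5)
Your proof is correct and follows essentially the same route as the paper: both directions sandwich $f$ between Boolean-type connection functions $p\,\ind(r\leq a)$, invoke Theorem \ref{thm: perco-th-p-boolean} (the paper routes the unbounded-support direction through Corollary \ref{cor: bound_lambda_c_general_f}, which rests on the same theorem), and use $\zeta(2a)\to 0$ as $a\to\infty$. Your write-up is in fact slightly more explicit than the paper's on the step from $\mu_c(\underline f)<\mu_0$ to $\lambda_c^{\underline f}(\mu_0)<\infty$ via monotonicity of $\lambda_c(\cdot)$ in the hub intensity.
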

\begin{proof}
If $f$ has bounded support, let $\bar{f}(r) = \ind(f(r)>0)$. By coupling, $\mathcal{G}(\Phi,\Psi,f)$ is dominated by $\mathcal{G}(\Phi,\Psi,\bar{f})$. We have $\mu_c \geq \bar{\mu}_c >0$.

   Now consider $f$ with infinite support. Consider a Poisson Boolean model. For all $\epsilon>0$, one can find $a_{\epsilon}$ large enough such that for $\lambda=\epsilon$ and $g(r)= \ind(r\leq 2a_{\epsilon})$, $\mathcal{G}(\Phi,g)$ percolates. By the infinite support, one can find $b\geq a_{\epsilon} $ such that $f(b)>0$.  Using Corollary \ref{cor: bound_lambda_c_general_f}, $\mu_c\leq \zeta(2a_{\epsilon})\leq \epsilon$.
\end{proof}
    Lemma \ref{lemma: iff_bounded_support} shows that if agents' traveling distance is unbounded, an arbitrary small density of hubs can cause disease spreading given that the density of agents is large enough. Combining this with Remark \ref{remark: bounded_travel}, we show that long-distance travels (e.g., via airplanes) play a critical role in disease spreading.



\subsection{Dispersed RBG Graph}
Let $f_p$ be the dispersed version of a connection function $f$ as defined earlier. For any fixed $\mu$, let $\lambda_c(\mu)$ denote the percolation threshold for the RBG graph with connection function $f$ and $\lambda_c^p(\mu)$ denote that for the RBG graph with $f_p$. $\mu_c$ and $\mu_c^p$ are defined similarly. For the random geometric unipartite graph, it is shown in \cite{Franceschetti05continuumpercolation} that $\zeta_f\geq \zeta_{f_p}$. Later strict inequality is proved in some cases. 
 Here we show that a similar result holds for the RBG graph.

     

 \begin{theorem}
 For all $f$, $p\in(0,1]$, 
\begin{equation}
\lambda_c(\mu)\geq \lambda_c^p(\mu),\nonumber
\end{equation} 
and consequently,
\[\mu_c^p\leq\mu_c.\]
 \end{theorem}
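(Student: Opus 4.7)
The plan is to mirror the scaling-plus-renormalization argument used for the unipartite random connection model in Franceschetti et al.\ \cite{Franceschetti05continuumpercolation} and adapt it to the bipartite RBG setting. The target is the first inequality $\lambda_c(\mu) \geq \lambda_c^p(\mu)$; the second statement $\mu_c^p \leq \mu_c$ then follows directly from the definition of $\mu_c$, since any $\mu$ with $\lambda_c(\mu) < \infty$ automatically satisfies $\lambda_c^p(\mu) \leq \lambda_c(\mu) < \infty$.

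First I would reduce to a cleaner comparison via spatial rescaling. Under the map $x \mapsto p^{1/d} x$, a PPP of intensity $\nu$ becomes one of intensity $\nu/p$, and the connection probability at rescaled distance $r$ equals $f_p(r\,p^{-1/d}) = p f(r)$. Since percolation is invariant under rescaling of the ambient space, $\mathcal{G}(\Phi_\lambda, \Psi_\mu, f_p)$ percolates if and only if $\mathcal{G}(\Phi_{\lambda/p}, \Psi_{\mu/p}, p f)$ percolates. It therefore suffices to show that percolation of $\mathcal{G}(\Phi_\lambda, \Psi_\mu, f)$ implies percolation of $\mathcal{G}(\Phi_{\lambda/p}, \Psi_{\mu/p}, p f)$.

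I would then couple the two graphs on a common probability space by (i) writing $\Phi_{\lambda/p} = \Phi_\lambda \cup \Phi'$ with $\Phi'$ an independent PPP of intensity $\lambda(1-p)/p$ (and similarly for $\Psi$), and (ii) independently thinning every bipartite edge with retention probability $p$. Because $pf \leq f$, edge-wise stochastic dominance fails. Instead, I would use a block renormalization: partition $\mathbb{R}^d$ into cubes of side $L$, declare a cube good when a prescribed local connectivity pattern of agent--hub--agent paths joins it to each face-adjacent cube, and choose $L$ so large that percolation of $\mathcal{G}(\Phi_\lambda, \Psi_\mu, f)$ forces the good-cube site process on $\mathbb{Z}^d$ to be supercritical. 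The core claim is that the good-cube probability in the coupled graph is at least as large: the extra points from the density inflation contribute additional independent two-edge paths that can substitute for any one path killed by edge thinning, after which a Peierls estimate converts supercritical good-cube site percolation into an infinite agent--hub--agent cluster in the coupled graph.

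The main obstacle is this last monotonicity claim, because two-edge paths that share a hub are positively correlated in a way that is sensitive to the coupling, and the thinning reduces every path probability while the density inflation only adds new paths. My plan is to condition on $\Psi$, after which the bipartite edges become independent Bernoullis, and to compare the conditional agent graphs in the two settings using the first- and second-moment expressions for $N$ and $M$ obtained in Theorem~\ref{thm: N,M}. A second-moment bound together with the FKG inequality (applied to the increasing family of good-cube indicator events) should produce the required lower bound on the good-cube probability in the coupled graph and close the argument.
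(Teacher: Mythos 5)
Your reduction by spatial rescaling is exactly right and matches the paper: $\mathcal{G}(\Phi_\lambda,\Psi_\mu,f_p)$ percolates iff $\mathcal{G}(\Phi_{\lambda/p},\Psi_{\mu/p},pf)$ does, so it suffices to show that percolation of $\mathcal{G}(\Phi_\lambda,\Psi_\mu,f)$ implies percolation of $\mathcal{G}(\Phi_{\lambda/p},\Psi_{\mu/p},pf)$. But at that point you miss the one idea that makes the proof short, and what you substitute for it does not close. The key observation is that both graphs are percolation processes on the \emph{same} parent graph $\mathcal{G}(\Phi_{\lambda/p},\Psi_{\mu/p},f)$: since $\Phi_\lambda$ (resp.\ $\Psi_\mu$) is an independent $p$-thinning of $\Phi_{\lambda/p}$ (resp.\ $\Psi_{\mu/p}$), the graph $\mathcal{G}(\Phi_\lambda,\Psi_\mu,f)$ is the \emph{site}-percolated version of the parent with retention probability $p$, while $\mathcal{G}(\Phi_{\lambda/p},\Psi_{\mu/p},pf)$ is its \emph{bond}-percolated version with the same retention probability $p$. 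The classical inequality that bond percolation dominates site percolation at equal parameter (as used in \cite{Franceschetti05continuumpercolation}, \cite[Chapter 1]{grimmett1999percolation}) then gives the implication immediately. This is the entirety of the paper's argument.

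Your replacement for this step --- coupling via $\Phi_{\lambda/p}=\Phi_\lambda\cup\Phi'$ plus edge thinning, then a block renormalization with good cubes, Peierls, FKG, and the second moments of $N$ and $M$ --- is where the proof would stall. The ``core claim'' you state, that the extra points contribute enough independent two-edge paths to compensate for every path killed by thinning each edge with probability $1-p$, is precisely the quantitative content you would need to prove, and nothing in the proposed machinery delivers it: FKG applied to increasing events and a second-moment bound on $N$ give correlation and concentration information, not the required monotone comparison between a point-inflated graph with weakened edges and the original graph. You correctly identify this as ``the main obstacle'' and then assert the plan ``should'' close the argument, which is a hope rather than a proof. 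The fix is to discard the renormalization entirely and invoke the site-versus-bond comparison on the common parent graph; the second claim $\mu_c^p\le\mu_c$ then follows from the definition of $\mu_c$ exactly as you say.
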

\begin{proof}
Firstly, consider any fixed graph. Consider its site percolation and bond percolation respectively \cite[Chapter 1]{grimmett1999percolation} as follows: the former  independently removes points in $\Phi$ and $\Psi$ with probability $1-p$; the latter independently removes edges with probability $1-p$. The bond percolation threshold is no greater than the site percolation \cite{Franceschetti05continuumpercolation}.
Now, consider any realization of $\mathcal{G}$ generated from $\Phi'=\Phi/p,~\Psi'=\Psi/p,~f(r)$. $\lambda_c(\mu)$ is the site percolation threshold (with probability $p$) of the graph $\mathcal{G}(\Phi',\Psi',f)$, and  equivalently, the percolation threshold for $\mathcal{G}(\Phi,\Psi,f)$. $\lambda_c^p(\mu)$ is the bond percolation threshold of the graph $\mathcal{G}(\Phi',\Psi',pf(r))$, which is equivalent to the percolation of $\mathcal{G}(\Phi,\Psi,~pf(\sqrt{p}r))$ by  scaling. The second inequality follows from the definition of $\mu_c$.
\end{proof}
  A dispersive function lowers the percolation threshold and expands the percolation region. This is consistent with the monotonicity and the limit of the mean degree in Corollary \ref{cor: dispersed-mean-degree}.
One may expect (\ref{eq: g-w-branching-bound}) to become more tight when $p\to0$ ($f$ is dispersive).  In the case where (\ref{eq: g-w-branching-bound}) is tight, we have \begin{align}
        \zeta_f &\leq \min_{(\lambda,\mu)\in\mathcal{D}}\lambda+\mu\nonumber \\
        &\approx \min_{(\lambda,\mu)\in\mathcal{D}} \lambda + {\lambda^{-1}\left(\int_{\mathbb{R}^d}f(\|x\|) \dd x\right)^{-2}} \nonumber\\
        &= {2}{\left(\int_{\mathbb{R}^d}f(\|x\|) \dd x\right)}^{-1}.\nonumber
\end{align}


    
\section{Discussion}
\subsection{Dependent RBG Models}
For epidemic modeling, RBG models also admit cases where there is dependence between $\Phi$ and $\Psi$, i.e., correlated hub and agent locations. From the degree point of view, one would expect a higher mean degree $\mathbb{E}M$ than in the independent case when there is clustering/attraction between $\Phi$ and $\Psi$ and lower when there is repulsion. From the percolation point of view, the impact of such dependence is less straightforward.  
 Consider the Poisson cluster point process, with $\Psi$ the parent process and $\Phi$ the daughter process. One expects that ``too much clustering'' would prevent spreading from one cluster to another. For the repulsive type of dependence, for instance, agents could be “repelled” from hubs by placing them at the Voronoi corners of the hubs' Voronoi tessellation.  This may not be practical but could be interesting as a theoretical thought experiment - now the distances from hub to agents are relatively larger, but each agent has three hubs at the same distance, which could spur growth. Considering both situations, we expect that the attractive case makes local spreading easier but global spreading (percolation) harder, while the repulsive case works the other way around. Of course these are very qualitative statements; for a rigorous analysis, one would have to properly rescale or normalize the densities and the connection function. 
\subsection{Dynamics}
The static RBG model in this work can be interpreted as capturing discrete-time disease dynamics as follows: in each
time step, the infectious agents infect those connected to the same hub(s) and
then recover in the next time step(s). Since all edges are i.i.d., the network percolates
effectively if starting with an infected agent at the origin from time 0, eventually a
non-trivial fraction of the entire population is infected at least once.  

An important direction for future work is to incorporate general temporal dynamics into this static model. For example, one may consider dynamics in edges and agent locations as well as in agents' states of infection and recovery. In this direction, one can consider the spatiotemporal graph models as in \cite[Chapter 11]{haenggi2012stochastic} to trace the paths and delays of disease spread. Another way is to introduce susceptible-infected-susceptible dynamics of agents' states based on the RBG graph and study its steady state behaviors \cite[Chapter 7]{liggett1985interacting}.

\section{Conclusion Remarks}
This work proposes a RBG graph model for disease spreading via hubs. We study the joint effect of the agent density, hub density, and connection function. The existence of a critical hub density depends only on the boundedness of the support of the connection function, which relates to curbing the traveling distance of individuals. When it comes to dispersion, both the degree distribution and the percolation threshold suggest that increasing dispersion helps spread the disease. The percolation properties of RBG graphs relate to unipartite graphs with modified connection functions. 
An interesting question in this direction is if and when the properties of the RBG graphs can be well represented by unipartite graphs with some modified connection functions. Our conjecture is that for independent connections between different pairs of agents, such representation is unlikely due to the oblivion of the local dependence (present in the RBG models). 
 Another direction is to consider hybrid models where agents may get infected either through common hubs or direct interactions between agents. The former infection mechanism is more centralized than the latter.

\appendix
\subsection{Proof of Theorem \ref{thm: N,M}}
\label{appendix: N,M}
 \begin{align}
       \Ex N  
       &=\Ex\sum_{y\in\Psi}\sum_{x \in\Phi} f(\|y\|)f(\|x-y\|)\nonumber\\
       &\peq{a}\lambda\mu \int_{\mathbb{R}^d}\int_{\mathbb{R}^d}  f(\|y\|)f(\|x-y\|) \dd x \dd y\nonumber\\
       & = \lambda\mu \left(\int_{0}^{\infty}f(r)dc_d r^{d-1} \dd r\right)^2.\nonumber
    \end{align}
 Step (a) follows from Campbell's theorem and the independence of $\Phi$ and $\Psi$. 
 \begin{align}
            \nonumber\Ex M & =  
            \Ex\sum_{x \in\Phi} \max_{y\in\Psi}I(o,y)I(x,y)\nonumber \\
            & = \Ex\sum_{x\in\Phi}1-\prod_{y\in\Psi} (1-f(\|y\|)f(\|x-y\|))\\
            & \peq{a}\lambda \int_{\mathbb{R}^d}1- \Ex\prod_{y\in\Psi} \left(1-f(\|y\|)f(\|x-y\|)\right) \dd x \nonumber\\
     & \peq{b}  \lambda\int_{\mathbb{R}^d}  1-\exp\left(-\mu\int_{\mathbb{R}^d}f(\|y\|)f(\|x-y\|)\dd y\right) \dd x.\nonumber
   \end{align}   
   Step (a) again follows from Campbell's theorem. Step (b) follows from the probability generating functional (PGFL) of the PPP.
   \begin{align}
  & \Ex N^2 \nonumber \\
   &= \Ex \left(\sum_{y\in\Psi}\sum_{x \in\Phi} I(o,y)I(x,y)\right)^2\nonumber\\
       & = \Ex \sum_{y\in\Psi}\sum_{x \in\Phi} I(o,y)I(x,y) \nonumber\\
       &\quad+\Ex\sum_{y\in\Psi}\sum_{x_1,x_2\in\Phi}^{\neq} I(x_1,y)I(o,y)I(x_2,y)\nonumber\\
       &\quad+\Ex\sum_{x\in\Phi}\sum_{y_1,y_2\in\Psi}^{\neq} I(x,y_1)I(o,y_1)I(x,y_2)I(o,y_2)\nonumber\\
       &\quad+\Ex\sum_{y_1,y_2\in\Psi}^{\neq}\sum_{x_1,x_2\in\Phi}^{\neq} I(x_1,y_1)I(o,y_1)I(x_2,y_2)I(o,y_2).\nonumber
   \end{align}
Here $\sum_{x_1,x_2\in\Phi}^{\neq}$ means that the summation takes over all ordered pairs of distinct points of $\Phi$. Observe that the first term is $\Ex N$ and the last term is equivalent to $(\Ex N)^2$. Using the joint density of $\Phi$ and $\Psi$ as well as the reduced second moment density of $\Phi$ and $\Psi$ we obtain (\ref{eq: E_N_o^2}).   

   For $M$,
   \begin{align}
        &\Ex M^2\nonumber \\
        &= \Ex \left(\sum_{x \in\Phi} \max_{y\in\Psi}I(o,y)I(x,y) \right)^2\nonumber\\
            & = \Ex \sum_{x\in\Phi} \max_{y\in\Psi}I(o,y)I(x,y) \nonumber \\
            &\quad+ \Ex\sum_{x_1,x_2\in\Phi}^{\neq}\max_{y_1\in\Psi}I(o,y_1)I(x_1,y_1)\max_{y_2\in\Psi}I(o,y_2)I(x_2,y_2)\label{eq: last}\\
            & \geq \Ex M + (\Ex M)^2.\nonumber
   \end{align}
   The last step follows from the following facts: the first term of (\ref{eq: last}) $\Ex \sum_{x\in\Phi} \max_{y\in\Psi}I(o,y)I(x,y) = \Ex M$; for the second term of  (\ref{eq: last}), it is equal to $(\Ex M)^2$ if  
$\max_{y_1\in\Psi}I(o,y_1)I(x_1,y_1)$ and $\max_{y_2\in\Psi}I(o,y_2)I(x_2,y_2)$ are independent. When the max over $\Psi$ is possible with $y_1=y_2$, there exists a positive correlation between them, hence the inequality.

\subsection{Proof of Corollary \ref{cor: examples-N-M}}
\label{appendix: examples-N-M}
The fact that $\Ex N = \lambda\mu {\pi^2}{\theta^4}$ directly follows from Theorem 1 and the integration of $f_1$.  To calculate $\Ex M$, we first present a useful lemma.      \begin{lemma}
For a stationary PPP $\Psi\subset \mathbb{R}^2$ with density $\mu$ and a deterministic $x\in\mathbb{R}^2$, the point process $\{y\in\Psi: \|y\|+\|x-y\|\}$ is a 1D PPP with intensity function
\begin{equation}
\lambda_x(r) = \frac{\mu\pi}{4}\frac{ 2r^2-\|x\|^2}{\sqrt{r^2-\|x\|^2}}, \quad r\geq \|x\|.    
\label{eq: lambda_x}
\end{equation}
   \end{lemma}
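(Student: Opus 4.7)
The plan is to apply the mapping theorem for Poisson point processes to the transformation $T\colon \mathbb{R}^2 \to [\|x\|,\infty)$ defined by $T(y) = \|y\| + \|x-y\|$. Since $\Psi$ is a PPP with intensity $\mu$, the image $T(\Psi)$ is an inhomogeneous 1D PPP (almost surely, $T$ is finite-to-one on $\Psi$). Its intensity function at $r$ is obtained as $\lambda_x(r) = \mu\, A'(r)$, where $A(r)$ is the Lebesgue measure of the sub-level set $\{y \in \mathbb{R}^2 : T(y) \leq r\}$.

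Next I would identify the sub-level set geometrically: $\{y : \|y\| + \|x-y\| \leq r\}$ is exactly the closed region enclosed by the ellipse with foci at $0$ and $x$ and sum of focal distances equal to $r$. The classical parameters of this ellipse are semi-major axis $a = r/2$, focal half-distance $c = \|x\|/2$, and semi-minor axis $b = \sqrt{a^2 - c^2} = \tfrac{1}{2}\sqrt{r^2 - \|x\|^2}$. Hence
\begin{equation}
A(r) = \pi a b = \frac{\pi}{4}\, r \sqrt{r^2 - \|x\|^2}, \quad r \geq \|x\|. \nonumber
\end{equation}

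Finally, I would differentiate to extract the intensity:
\begin{equation}
A'(r) = \frac{\pi}{4}\left( \sqrt{r^2-\|x\|^2} + \frac{r^2}{\sqrt{r^2-\|x\|^2}}\right) = \frac{\pi}{4} \cdot \frac{2r^2 - \|x\|^2}{\sqrt{r^2-\|x\|^2}}, \nonumber
\end{equation}
so that $\lambda_x(r) = \mu A'(r)$ gives the stated formula for $r \geq \|x\|$.

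I do not anticipate a serious obstacle: the mapping theorem gives the PPP structure essentially for free, and the area calculation is a standard ellipse formula. The only subtlety worth flagging is that $T$ is two-to-one on $\mathbb{R}^2 \setminus [0,x]$ (two preimages on the ellipse map to the same $r$), but this does not require a factor-of-two correction because both preimages contribute simultaneously to the sub-level set whose \emph{area} we differentiate; the mapping-theorem form via $\mu A'(r)$ already accounts for the full preimage mass. The integrable singularity at $r = \|x\|$ is consistent with the ellipse degenerating to the segment $[0,x]$.
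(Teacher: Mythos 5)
Your proof is correct and follows essentially the same route as the paper: identify the sub-level set $\{y:\|y\|+\|x-y\|\le r\}$ as the interior of an ellipse with foci $0$ and $x$, compute its area $\frac{\pi r}{4}\sqrt{r^2-\|x\|^2}$ to get the intensity measure $\Lambda([\|x\|,r])$, and differentiate. The paper phrases the Poisson-preservation step via independence of counts in disjoint regions rather than invoking the mapping theorem by name, but the substance is identical.
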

   \begin{proof}
   First note that $\|y\|+\|x-y\|\geq\|x\|$ by the triangular inequality. So the points lie in $[\|x\|,\infty)$. $\|y\|+\|x-y\|\leq r$ is a closed region whose boundary is an ellipse. Its area only depends on $\|x\|$ and $r$. The intensity measure is
  \[\Lambda([\|x\|,r])= \Ex \sum_{y\in\Psi} \ind(\|y\|+\|x-y\|\leq r) = \frac{\mu\pi r}{4}\sqrt{r^2-\|x\|^2}.\] 
  Taking its derivative, we obtain \eqref{eq: lambda_x}. The Poisson property is preserved as the numbers of points in disjoint regions are independent.
 \end{proof}

Now,
\begin{align}
  &\Ex \max_{y\in\Psi} I(o,y)I(x,y)\nonumber \\
     &=\Ex_{\Psi}\left[ 1- \prod_{y\in\Psi} \left(1-\frac{1}{4}\exp\big(- (\|y\|+\|x-y\|)/\theta\big)\right)\right]\nonumber\\
     &\peq{a} 1-\exp\left(-\int_{\|x\|}^{\infty} \lambda_{x}(r) \frac{1}{4}\exp(-r/\theta)\dd r\right)\nonumber\\
     &=1-\exp\left(-\frac{\mu\pi \|x\|^2}{16}\left(\frac{2\theta K_1(\|x\|/\theta )}{ \|x\|}+K_0\big( \|x\|/\theta \big)\right)\right)\nonumber.
   \end{align}
    Step (a) follows from Lemma 1 and the PGFL of the PPP.

\printbibliography
\end{document}